\documentclass[lettersize,journal]{IEEEtran}
\usepackage{amsmath,amsfonts}
\usepackage{array}
\usepackage[caption=false,font=footnotesize]{subfig}
\usepackage{textcomp}
\usepackage{stfloats}
\usepackage{url}
\usepackage{verbatim}
\usepackage{graphicx}
\usepackage{cite}
\usepackage[ruled,vlined,linesnumbered]{algorithm2e}
\SetKwRepeat{Repeat}{repeat}{until}  

\usepackage{threeparttable}
\usepackage{amsmath,graphicx,cite,amssymb}
\usepackage{amsfonts,multirow,bm,array,setspace,stfloats}
\usepackage{graphicx,float,cite,amssymb}
\usepackage{graphics}
 \DeclareGraphicsExtensions{.pdf,.jpeg,.png,.jpg}   
\usepackage{multirow,bm,bbm,array,setspace,dsfont}
\usepackage{textcomp}
\usepackage{psfrag}
\usepackage{color}
\usepackage{pstricks,enumerate}
\usepackage{bbm}
\usepackage{subfig}
\usepackage{makecell}
\usepackage{amsthm}

\newtheorem{proposition}{Proposition}

\newtheorem{theorem}{Theorem}

\newtheorem{remark}{Remark}

\newcommand{\bA}{\bm A}
\newcommand{\ba}{\bm a}
\newcommand{\bB}{\bm B}

\newcommand{\bI}{\bm{I}}

\newcommand{\bu}{\bm u}

\newcommand{\bbC}{\mathbb{C}}

\newcommand{\cN}{\mathcal{N}}
\newcommand{\cK}{\mathcal{K}}

\begin{document}
%
\title{Curved Beam Enabled Wireless Communications: Modeling, Analysis and Optimization}
\author{
	\IEEEauthorblockN{Jiawei Yao, \IEEEmembership{Student Member,~IEEE}, Xiaoren Xu, \IEEEmembership{Student Member,~IEEE}, Walid Saad, \IEEEmembership{Fellow,~IEEE},\\ Mingzhe Chen, \IEEEmembership{Senior Member,~IEEE} 
}\vspace{-1.3em} 
\thanks{
Jiawei Yao and Xiaoren Xu are with the Department of Electrical and Computer Engineering, University of Miami, Coral Gables, FL, 33146, USA (Email: \{jiaweiyao,xiaoren.xu\}@miami.edu).

Walid Saad is with the Bradley Department of Electrical and Computer Engineering, Virginia Tech, Alexandria, VA, 22305, USA. (Emails: walids@vt.edu).

Mingzhe Chen is with the Department of Electrical and Computer Engineering and Frost Institute for Data Science and Computing, University of Miami, Coral Gables, FL, 33146, USA (Email:  mingzhe.chen@miami.edu).

\textit{(Corresponding Author: Mingzhe Chen)}.
}
}


%


\maketitle

\begin{abstract}
In this paper, the problem of using curved beams to improve wireless communication performance in the presence of a blockage is studied. In particular, a transmitter equipped with a continuous aperture array can generate curved beams to serve multiple receivers by allowing signals to propagate along both straight and curved paths. To optimize the weighted sum-rate, a curved beam model is developed for controlling the beam steering, beam focusing, and beam curving functions, along with a segmented channel model to characterize practical channels induced by the blockage. Based on the introduced curved beam model, an optimization problem is posed with the goal of maximizing the weighted sum-rate of all users under a transmit power budget and physical constraints of curved beams. To solve this problem, the continuous aperture is first converted into finite summations via a discrete sampling of the continuous coordinate. Then, the performance gap between the ideal continuous aperture design and its practical discrete aperture approximation is analyzed. Based on the above discrete approximation, an iterative algorithm is developed to optimize curved beam control parameters. In particular, the original problem is reformulated as a trackable form via fractional programming (FP). Then, the transformed problem is solved by designing an enhanced block coordinate ascent (BCA) method which determines a surrogate-construction point leveraging the local descent from previous iterations, thereby accelerating convergence. Then, a proximal regularization term is included into the surrogate function to control the update magnitude and suppress aggressive update, thereby improving updates stability. Finally, the beam amplitudes are computed based on the effective channel gains. Simulation results show that the proposed method can improve the weighted sum-rate by up to $60\%$ compared to using only straight beam.

\end{abstract}
\begin{keywords}
Curved beams, beamforming, beam steering, beam focusing, beam curving.
\end{keywords}

\section{Introduction}\label{sec:overview}

\IEEEPARstart{M}{ASSIVE} multiple-input multiple-output (MIMO) has become a key technology for next-generation wireless networks, as they exploit large antenna arrays and beamforming to enhance spectral efficiency and support massive connectivity \cite{6G_survey,blind_MIMO}. In conventional beamforming design, the transmitters adjust digital beamformings to generate straight beams and steer the beams toward target receivers. However, in non-line-of-sight (NLoS) environments (e.g., in the presence of blockages), signal strength of these straight beams will be attenuated due to the obstruction of the propagation path \cite{RIS_PAN,Kc,blind_multi,blind_twc}. This challenge becomes more serious in near-field millimeter-wave (mmWave) and terahertz (THz) bands \cite{mm_survey2,THzsurvey2}, due to high propagation loss and strong sensitivity to blockage. Meanwhile, near-field mmWave and THz networks provide finer-grained beam shaping. By enabling wave propagation along curved paths around blockages, curved beams offer a promising solution to signal attenuation suffered by straight beams in NLoS environments \cite{Cur-THz-conf,Cur-THz-NC,CurvingTHz_CE,CurZ,Airy-beam,AiryPRL}. However, designing curved beams for wireless networks faces some challenges including: i) how to model curved beam propagation in practical environments, ii) how to analyze relationship between curved beam parameters and system performance, as well as the performance gap between practical implementations and ideal designs, and iii) how to optimize curved beam parameters for improving communication performance.

\subsection{Related Works}
Recently, several works in \cite{Cur-THz-conf,Cur-THz-NC,CurvingTHz_CE,CurZ} studied the use of curved beams to improve wireless network performance. In particular, the authors in\cite{Cur-THz-conf} and \cite{Cur-THz-NC} studied a curved beam design in sub-THz wireless networks, where neural networks are used to extract the features of obstructed environments and optimize the curved beam related parameters. The work in \cite{Cur-THz-NC} further showed that by learning the parameters of curved beams, the transmitter can adapt the wavefront in real time under dynamic blockage conditions with much lower overhead than exhaustive search or iterative near-field simulations. The authors in \cite{CurvingTHz_CE} investigated the use of self-accelerating THz beams to generate curved wireless links around obstacles in near-field.
They developed a trajectory-engineering method for generating curved beams, and verified that such curved beams can outperform conventional straight beams. In \cite{CurZ}, the authors considered a near-field secure communication system based on curved caustic beams. They divided the transmit array into focusing and caustic sub-arrays, such that the focused sub-array concentrate the beam energy to the legitimate user while the caustic sub-array leads the beam around the eavesdropping region. However, most of these works \cite{Cur-THz-conf,Cur-THz-NC,CurvingTHz_CE,CurZ} focused on point-to-point or single-user scenarios. As such, the solutions in \cite{Cur-THz-conf,Cur-THz-NC,CurvingTHz_CE,CurZ} cannot be applied for multi-user curved beam communication scenarios since the curved beam parameters of different users are coupled through both desired signal enhancement and inter-user interference suppression. Meanwhile, the works in \cite{Cur-THz-NC,CurvingTHz_CE,CurZ} generated curved beams through heuristic and implementation-oriented methods (e.g., array-partitioning strategy) and hence, they do not provide a unified closed-form curved beam model that can support mathematical performance analysis, tractable optimization, and performance guarantees.


 To optimize the parameters of a curved beam, it is necessary to establish a tractable connection between the continuous-aperture beamforming model and the resulting communication performance metrics. Several works \cite{C-MIMO,Ouyang,Xidong,XD2,OY2} have investigated mathematically tractable beamforming optimization for continuous aperture arrays. The authors in \cite{C-MIMO} studied a continuous aperture enabled MIMO system. They developed a pattern-division multiplexing (PDM) framework by transforming the continuous pattern design into the optimization of projection coefficients on finite orthogonal bases. In \cite{Ouyang}, the authors considered continuous aperture-based MIMO systems and developed a weighted minimum mean square error-based beamforming design framework with calculus-of-variations-based closed form updates. The authors in \cite{Xidong} extended the continuous aperture optimization to multicast communications by developing a tractable block coordinate descent-based beamforming algorithm. In \cite{XD2}, the authors investigated a continuous aperture-based secure wireless communication system by jointly optimizing the information-bearing and artificial-noise source current patterns. The work in \cite{OY2} studied the optimal beamforming design for multi-user continuous aperture systems from a continuous-function optimization perspective, where a polyblock outer approximation algorithm was adopted. However, most of these works \cite{C-MIMO,Ouyang,Xidong,XD2,OY2} mainly focused on straight or generic beamforming optimization for continuous-aperture arrays under free-space channel models. Their closed-forms or efficiently solvable update rules are derived based on an explicit and analytical relationship between the continuous aperture beamforming and the resulting objective functions (e.g., received signal power, sum-rate). However, such analytical structures do not directly extend to curved beam design. This is because the curved beam parameters introduce additional nonlinear and coupled phase terms beyond the linear steering phase over the continuous aperture, while blockage further makes the end-to-end channel depend on segmented propagation in an inhomogeneous environment.

\subsection{Contributions}
The main contribution of this paper is a novel curved beam enabled wireless communication framework, that allows a transmitter to generate both straight and curved beams by adjusting curved beam parameters in blockage-prone environments. Our key contributions include:
\begin{itemize}
    \item We consider a wireless communication system in which a transmitter equipped with a continuous aperture array serves multiple receivers in the presence of a blockage. The transmitter can adjust the curved beam parameters that allow signals to propagate in both straight and curved path, thereby reducing attenuation induced by the blockage and improving data rates of receivers. To characterize the curved beam generation, we first develop a curved beam model that controls beam steering, beam focusing, and beam curving functions. Then, we present a segmented channel model to capture both the beam propagation process and the large-scale fading in blockage environments. Based on the curved beam model and channel model, we formulate an optimization problem whose goal is to maximize the weighted sum-rate of all users under the transmit power budget and physical constraints of curved beams.

    \item To solve this problem, one cannot directly optimize over the continuous aperture, since the original problem involves infinite-dimensional beamforming functions and continuous-domain integrations that may not admit closed-form expressions. To this end, we first discretize the continuous aperture into a finite number of sampled antennas and analyze the performance gap between the ideal continuous aperture design and its practical discrete aperture approximation. In particular, we derive an upper bound on the weighted sum-rate gap, which reveals how the aperture discretization and curved beam related parameters affect the achievable communication performance. The analytical result also provides a theoretical insight for converting the original continuous aperture optimization problem into a tractable discrete aperture optimization.

    \item Based on the above discretization, we develop an iterative algorithm to optimize the curved beam parameters. In particular, we reformulate the original problem via fractional programming (FP) to transform the non-convex sum-rate objective function as a trackable linear and quadratic function. Then, we solve the transformed problem via designing an enhanced block coordinate ascent (BCA) method which determines a surrogate-construction point leveraging the local descent from previous iterations, thereby accelerating convergence. Then, a proximal regularization term is included into the surrogate function to control the update magnitude and suppress aggressive update, thereby improving updates stability. Finally, we calculate the beam amplitudes according to the effective channel gains.
\end{itemize}
Simulation results show that the proposed method can improve the weighted sum-rate by up to $60\%$ and $12\%$ compared with using only straight beam methods and other curved-beam baseline schemes. \textit{To the best of our knowledge, this is one of the first works to jointly study curved beam modeling, performance analysis, and parameter optimization for wireless communications in presence of a blockage region.}

The rest of the paper is organized as follows. The curved beam model and problem formulation are described in Section~\ref{sec:model}. In Section~\ref{sec:analysis}, we analyze the performance gap between the continuous aperture array and its discrete counterpart. The proposed optimization approach for curved beam parameters, along with the implementation and complexity analysis, is studied in Section~\ref{sec:solution}. Simulation results are presented and analyzed in Section~\ref{sec:experiments}, and conclusions are drawn in Section~\ref{sec:conclusion}.  

The notation hereinafter and throughout is summarized as follows. We use $a$, $\ba$ and $\bA$ to denote a scalar, a vector and a matrix, respectively. For a matrix $\bA$, $\bA^*$, $\bA^{-1}$, $\bA^{\dag}$, $\bA^\top$, and $\bA^\mathsf{H}$ denote the conjugate, inverse, pseudo-inverse, transpose, and Hermitian of matrix $\bA$, respectively. $\| \bA\|_F$ denotes the Frobenius norm. We use $\{\ba\}$ and $\{\bA\}$ to denote the collections of vectors $\ba$ and $\bA$. We use $\text{diag}(\ba)$ to denote the diagonal matrix with entries of $\ba$ on the diagonal. $N$-dimensional real vectors and complex vectors are denoted by $\mathbb{R}^{N\times 1}$ and $\bbC^{N\times 1}$. $N\times M$-dimensional complex matrices is denoted by $\bbC^{N\times M}$, respectively. We use $\mathcal{CN}(\cdot,\cdot)$ to denote a complex Gaussian distribution; $\bI_n$ to denote a $n\times n$ identity matrix. We use $\bA\odot \bB$ to denote the Hadamard product of matrix $\bA$ and $\bB$.


\section{System Model and Problem Formulation}\label{sec:model}

Consider a near-field wireless communication system in which a transmitter equipped with a continuous aperture array serves a set $\cK$ of $K$ receivers equipped with a uni-polarized antenna \cite{Ouyang,Cur-THz-conf,C-MIMO}. In this scenario, a blockage $\bm\Omega_\mathrm{o}$ exists such that the transmission link between the transmitter and the receivers may be NLoS. To reduce the signal attenuation caused by the blockage, the transmitter uses curved beams for signal transmission by adjusting the beam parameters (e.g., beam steering angle and beam curving ratio). Using curved beams allows the signals to propagate in a curved path instead of a straight line, thus avoiding signal transmission through the blockage, as shown in Fig.~\ref{fig:system}. Our overarching goal is to optimize curved beam parameters so as to maximize the data rate of receivers under a scenario with blockage. Next, we first introduce the curved beam model. After that, we present the models of signals received by the receivers and the achievable rate of each receiver. Finally, we formulate the optimization problem.

\subsection{Curved Beam Modeling}
We consider a two-dimensional (2D) Cartesian coordinate system whose transmit array is centered at the origin  $(0,0)$. An arbitrary point on the transmit array is given by $\bm \ell=(0,\ell)$ with $\ell\in\mathcal{L}=[-L/2,L/2]$, where $L$ is the linear dimension of the array. 

To model a curved beam for receiver $k\in\cK$, we need to define three functions: 1) beam steering function that steers the beam toward a desired direction, 2) beam focusing function that determines the start
point of beam curving, and 3) beam curving function that determines how the beam is curved from the start point. In particular, the beam steering function is given by \cite{Cur-THz-conf,Cur-THz-NC,VanTrees2002}:
\begin{align}\label{eq:beam_steering}
    c_1(\theta_k)= -\frac{2\pi \sin\theta_k}{\lambda},
\end{align}
where $\theta_k$ is the steering angle  associated with receiver $k$. As shown in Fig.~\ref{fig:system}(a), 
$\theta_k$ determines the beam angle with respect to the $+x$-axis.

The beam focusing function determines the region where signals are concentrated and also decides the start point of beam curving. This is given by~\cite{Cur-THz-conf,Cur-THz-NC,Goodman2005}:
\begin{align}\label{eq:beam_focusing}
c_2(f_k)=-\frac{\pi}{\lambda f_k},
\end{align}
where $f_k$ is the distance from the transmit array center $(0,0)$ to the signal concentrated point. As shown in Fig.~\ref{fig:system}(b), the beam is achieved by combining \eqref{eq:beam_steering} and \eqref{eq:beam_focusing}. Here, a small value of $f_k$ indicates that the beam will be curved near to the transmit array.

The beam curving function determines how the beam is curved from the signal concentrated point defined in \eqref{eq:beam_focusing}, which is represented as~\cite{AiryPRL,Cur-THz-conf,Cur-THz-NC,Airy-beam}:
\begin{align}\label{eq:beam_curving}
    c_3(B_k)=\frac{(2\pi B_k )^3}{3} ,
\end{align}
where $B_k$ is the curvature of the curved beam. As shown in Fig.~\ref{fig:system}(c), the curved beam is generated by the equations from \eqref{eq:beam_steering} to \eqref{eq:beam_curving}. Here, a smaller $B_k$ results in a flatter beam.

Given \eqref{eq:beam_steering}--\eqref{eq:beam_curving}, the curved beamforming function $J_k(\ell;\theta_k,f_k,B_k)\in\bbC$ of receiver $k$ at point $(0,\ell)$ of the continuous aperture will be:
\begin{align}\label{eq:get_J_k}
    J_k(\ell;\theta_k,f_k,B_k)=\alpha_k( \ell)e^{j\beta_k(\ell;\theta_k,f_k,B_k)},
\end{align}
where $\alpha_k(\ell)\ge 0$ is the amplitude that controls the beam strength, and
\begin{align}\label{eq:beamforming_phase}
    \beta_k(\ell;\theta_k,f_k,B_k)=c_1(\theta_k)\ell+c_2(f_k)\ell^2+c_3(B_k)\ell^3,
\end{align}
is the phase of the curved beamforming function. From \eqref{eq:beamforming_phase}, we see that unlike standard straight beams \cite{Cur-THz-conf,Cur-THz-NC}, which depend only on the beam steering function $c_1(\theta_k)$, a curved beam depends on beam steering, beam focusing, and beam curving functions with parameters $\{ \theta_k,f_k, B_k \}$.

\subsection{Received Signal and Achievable Rate} 
\begin{figure*}[t]
\begin{align}\label{eq:inter_planes}
    \mathcal{M}=\Big\{\bm r_m=(x_m,y_m) \Big|\, \bm r_0=\bm \ell, \ \bm r_M=\bm p_k, \  x_m=m\times \Delta_x, \ \Delta_x=x_k/M, \ y_m\in\mathbb{R},\ m=1,\ldots,M-1\Big\}.
\end{align}
\hrule 
\end{figure*}
Since the diffraction behavior of electromagnetic waves is altered in the presence of blockages, directly characterizing the end-to-end channel from the transmit aperture to receiver $k\in\cK$ is intractable. To address this issue, we discretize the signal propagation region from the transmit aperture to receiver $k\in\cK$ into $M$ sub-regions, such that the signal propagation in each sub-region can be characterized by the free-space channel model \cite{Cur-THz-conf,C-MIMO}. In particular, let $\bm r_m=(x_m,y_m)$ be the intermediate point in sub-region $m$, and let $\mathcal M$ in \eqref{eq:inter_planes} represent the set of all intermediate points along the propagation path.

 Let $\bm p_k=(x_k,y_k)$ be the 2D positions of the receiver $k\in\cK$ in the free-space. The effective channel of a curved beam from an arbitrary point $\bm \ell$ on the transmit aperture to receiver $k$ is
\begin{align}
    \tilde{h}(\bm p_k,\bm \ell)=\sqrt{\epsilon_k}\times\frac{\hat{h}(\bm p_k,\bm \ell)}{\sqrt{\int_{\mathcal{L}}\big|\hat{h}(\bm p_k,\bm \ell)\big|^2 d\ell}},
\end{align}
where $\beta_k$ is the large-scale pathloss coefficient,
\begin{align}\label{eq:eff_channel}
\hat{h}(\bm p_k,\bm \ell)
=
\int_{\mathbb{R}}\cdots\int_{\mathbb{R}}
\bigg[\prod_{m=1}^{M} \Gamma(&\bm r_m)h(\bm r_m,\bm r_{m-1})\bigg]\nonumber\\
&\quad \times dy_1\cdots dy_{M-1}.
\end{align} is the raw effective channel \cite{Cur-THz-conf,Cur-THz-NC}, which characterizes the propagation geometry and blockage effect along the transmission path. In \eqref{eq:eff_channel},
\begin{align}
    h(\bm{r}_m,\bm r_{m-1})=\bigg(&\frac{1}{\|\bm{r}_m-\bm r_{m-1} \|}-\frac{j2\pi}{\lambda}\bigg)\nonumber\\&\times\frac{|x_m-x_{m-1}|}{2\pi \|\bm{r}_m-\bm r_{m-1} \|^2}e^{j\frac{2\pi}{\lambda}\|\bm{r}_m-\bm r_{m-1} \|}.\label{eq:free-space channel}
\end{align}
is the free-space channel between two points $\bm r_m$ and $\bm r_{m-1}$ located in two adjacent sub-regions (i.e., regions $m$ and $m-1$) \cite{Ouyang,Cur-THz-conf,C-MIMO,Cur-THz-NC,Goodman2005},
and
\begin{align}\label{eq:blockage_function}
\Gamma(\bm r_m)=
\begin{cases}
\xi_m, & \bm r_m\in\bm\Omega_\mathrm{o},\\
1, & \bm r_m\notin\bm\Omega_\mathrm{o},
\end{cases}
\qquad m=1,\ldots,M,
\end{align}
is a blockage indicator with $\Gamma(\bm r_m)=\xi_m\in\bbC,\, 0<|\xi_m|<1$ representing that the signal transmission at $\bm r_m$ is attenuated by blockage, while $\Gamma(\bm r_m)=1$ indicating that no blockage is present at $\bm r_m$.

\begin{figure*}[t]
\centering
\includegraphics[width=16cm]{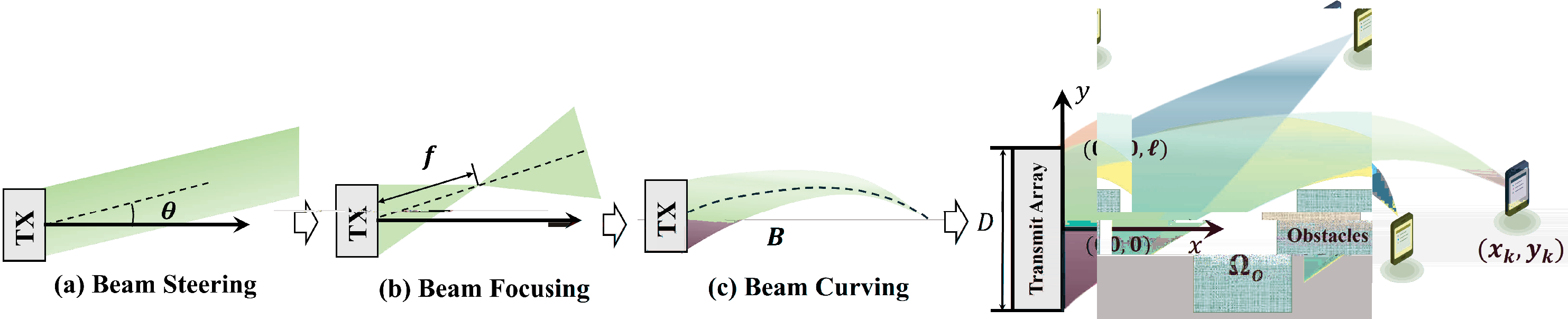}
\caption{ Curved beams enabled wireless communications.}
\label{fig:system}
\vspace{-1em}
\end{figure*}

Based on \eqref{eq:get_J_k} and \eqref{eq:eff_channel}, with the transmitted symbol $s_k\sim\mathcal{CN}(0,1)$, the received signal at receiver $k\in\cK$ is \cite{Ouyang,C-MIMO,Cur-THz-conf,Cur-THz-NC}
\begin{align}\label{eq:received_signal}
y_k
=
&\underbrace{\int_{\mathcal{L}}J_k(\ell;\theta_k,f_k,B_k)\, \tilde h(\bm{p}_k,\bm \ell)\,s_k\,d\ell}_{\text{desired signal}}
\nonumber\\& +
\underbrace{\sum_{i=1,\,i\neq k}^{K}\int_{\mathcal{L}} J_i(\ell;\theta_i,f_i,B_i)\, \tilde h(\bm{p}_k,\bm \ell)\,s_i\,d\ell + n_k}_{\text{inter-receiver interference}}.
\end{align}
where $n_k\sim\mathcal{CN}(0,\sigma_k^2)$ is an independent white Gaussian variables. Hence, the signal-to-interference-plus-noise ratio (SINR) at receiver $k\in\cK$ is
\begin{align}\label{eq:SINR}
 &\gamma_k\big(\{\theta_i,B_i,f_i\}_{i\in\cK}\big)\nonumber\\=\ \ &\frac{\Big|\int_{\mathcal{L}} J_k(\ell;\theta_k,f_k,B_k)\tilde h(\bm{p}_k,\bm \ell)d\ell\Big|^2}{\sum_{i=1, i\neq k}^K \Big|\int_{\mathcal{L}} J_i(\ell;\theta_i,f_i,B_i) \tilde h(\bm{p}_k,\bm \ell) d\ell\Big|^2+\sigma_k^2}.
\end{align}
The resulting achievable communication rate of receiver $k\in\cK$ is
\begin{align}
    R_k\big(\{\theta_i,B_i,f_i\}_{i\in\cK}\big)=\log_2\big(1+\gamma_k\big(\{\theta_i,B_i,f_i\}_{i\in\cK}\big)\big).
\end{align}

\subsection{Problem Formulation}\label{subsec:problem}
Given the considered curved beam system, we aim to maximize the weighted sum-rate of all receivers to ensure robust near-field communication within a network with blockages. The optimization variables include the steering angle $\theta_k$, the focusing distance $f_k$, and the curvature coefficient $B_k$. The resulting optimization problem is
\begin{subequations}
\label{pro:snr_max}
\begin{align}
    &\underset{\{\theta_i,f_i,B_i\}_{i=1}^K}{\max}  \ \ \ \sum_{k=1}^K \omega_k \, R_k\big(\{\theta_i,B_i,f_i\}_{i\in\cK}\big) \tag{\ref{pro:snr_max}}\\
    &\quad \ \ \,    \text {s.t.}  \  \qquad \, \,  \sum_{k=1}^K\int_{\mathcal{L}}\Big |J_k(\ell;\theta_k,f_k,B_k)\Big|^2\,d\ell \leq P_\mathrm{T}, \label{cont:power}\\ &\qquad\qquad \qquad -\frac{\pi}{2}\le \theta_k\le \frac{\pi}{2},\quad\ \ \forall k\in\mathcal K,\label{cont:theta}\\ &\qquad\qquad \qquad 0\leq f_k\le x_k,\qquad\ \  \, \forall k\in\mathcal K,\label{cont:f}\\ &\qquad\qquad \qquad -B_k^{\max}\leq B_k\leq B_k^{\max},\quad\ \forall k\in\mathcal K. \label{cont:B}
\end{align}
\end{subequations} 
where $\omega_k$ is a weight parameter of receiver $k$, used to reflect or balance receiver priority; $P_\mathrm{T}$ in constraint \eqref{cont:power} is the power budget of the transmitter \cite{C-MIMO}, \eqref{cont:theta} and \eqref{cont:f} ensures that the signal is steering and concentrated along the propagation direction, and \eqref{cont:B} prevents excessive beam curvature and remains within a practically realizable range.

The problem in \eqref{pro:snr_max} is challenging to solve due to the following reasons. First, the objective function is nonconvex with respect to the optimization variables, since it involves SINRs (i.e., $\gamma_k\big(\{\theta_i,B_i,f_i\}_{i\in\cK}\big)$) that are fractional and nonlinear of the variables $\{ \theta_i,f_i,B_i \}_{i=1}^K$. Second, the phase profile $e^{j\beta_k(\ell;\theta_k,f_k,B_k)},k\in\cK$ is periodic, which makes the objective admits multiple local optima. Third, the effective channel in \eqref{eq:eff_channel} is generated via iterative propagation across multiple intermediate regions, which results in a complicated and high-dimensional mapping from $\{ \theta_k,f_k,B_k \}_{k=1}^K$ to the objective function. Finally, both the objective function (i.e., achievable rate $R_k$) and constraints \eqref{cont:power} have integrations over the continuous aperture coordinate $\ell\in[-L/2,L/2]$. These integrals do not have closed-form expressions and make the gradient computation difficult.

\section{Performance Gap Between Continuous and Discrete Aperture}\label{sec:analysis}
To solve the problem in \eqref{pro:snr_max}, we first change integrations of objective functions in \eqref{pro:snr_max} and the constraint in \eqref{cont:power} 
into finite summations via discretely sampling the continuous aperture coordinate $\ell\in\mathcal{L}$. However, such approximation may introduce discretization errors.
Therefore, in this section, we analyze the approximation gap in terms of the weighted sum-rate.

\subsection{Performance Gap on  Weighted sum-rate}
We first assume that the continuous transmit aperture is discretized into $N$ uniformly spaced elements with sampling interval \(\Delta_\ell=L/N\le \lambda/2\), such that the transmitter is equipped with a set $\mathcal{N}$ of $N$ discrete antennas, whose positions are given by $\bm \ell_n=(0,\ell_n), n\in\cN$, where  
\begin{align}
    \ell_n=-\frac{L}{2}+\Big(n-\frac{1}{2}\Big)\times\Delta_\ell.
\end{align}
Then, the discrete beamforming vector for receiver $k\in\mathcal{K}$ is
\begin{align}\label{eq:dis_beamforming}
    \bm w_k(\theta_k,f_k,B_k)
    =
    \bm{\alpha}_k\odot e^{j\bm{\beta}_k(\theta_k,f_k,B_k)}
    \in\mathbb{C}^{N\times1},
\end{align}
where $\bm{\alpha}_k\triangleq[\alpha_k(\ell_1),\ldots,\alpha_k(\ell_N)]^\top,$ and
\begin{align}
    \bm{\beta}_k(\theta_k,f_k,B_k)
    \triangleq
    \Big[
    &c_1(\theta_k)\ell_1+c_2(f_k)\ell_1^2+c_3(B_k)\ell_1^3,
    \ldots,
    \nonumber\\
    &
    c_1(\theta_k)\ell_N+c_2(f_k)\ell_N^2+c_3(B_k)\ell_N^3
    \Big]^\top.
\end{align}
Based on \eqref{eq:eff_channel} and \eqref{eq:dis_beamforming}, with the transmitted symbol $s_k\sim\mathcal{CN}(0,1)$, the received signal at receiver $k\in\mathcal{K}$ is 
\begin{align}\label{eq:dis_y}
    \tilde{y}_k=\sum_{i=1}^K \Delta_\ell\tilde{\bm h}_k^\mathsf{H}\bm w_i(\theta_k,f_k,B_k)s_i+n_k,
\end{align}
where $\tilde{\bm h}_k=\big[\tilde{h}^*(\bm p_k,\bm \ell_1),\ldots,\tilde{h}^*(\bm p_k,\bm \ell_N)\big]\in\mathbb{C}^{N\times 1}$. Hence, the SINR at receiver $k\in\mathcal{K}$ is
\begin{align}\label{eq:dis_rate}
    \tilde{\gamma}_k\big(\{\theta_i,B_i,f_i\}_{i\in\cK}\big)=\frac{\Delta_\ell^2|\tilde{\bm h}_k^\mathsf{H}\bm w_k(\theta_k,f_k,B_k)|^2}{\sum_{i\neq k}\Delta_\ell^2| \tilde{\bm h}_k^\mathsf{H}\bm w_i(\theta_i,f_i,B_i)|+\sigma_k^2}.
\end{align}
The resulting achievable communication rate of receiver $k\in\mathcal{K}$ is 
$\tilde{R}_k\big(\{\theta_i,B_i,f_i\}_{i\in\cK}\big)=\log_2\big(1+\tilde\gamma_k\big(\{\theta_i,B_i,f_i\}_{i\in\cK}\big)\big)$.

Next, we analytically characterize the weighted sum-rate gap incurred by discretizing the continuous aperture array into $N$ discrete antennas.

\begin{theorem}\label{theorem:gap}
    Assume that the effective channel $\tilde{h}(\bm p_0,\bm \ell)$ in \eqref{eq:eff_channel} is twice differentiable with respect to $\ell$, and all aperture amplitudes are identical, i.e., $\alpha_k(\ell)=\alpha\geq0$, $\forall k\in\cK,\ \forall \ell\in\mathcal L$, the weighted sum-rate gap between the continuous aperture array and the discrete array formed by sampling the same aperture with $N$ elements is
    \begin{align}\label{eq:theorem_1_gap}
    &\bigg|\sum_{k=1}^K \omega_k R_k- \sum_{k=1}^K \omega_k \tilde R_k   \bigg|\nonumber\\\leq\, &\sum_{k=1}^K\bigg[\frac{\omega_k}{N^2\ln{2}}\bigg(\frac{2|g_{k,k}|}{\sigma_k^2}D_{k,k}+\frac{2|g_{k,k}|^2}{\sigma_k^4}\sum_{i\neq k}|g_{k,i}|D_{k,i}\bigg)\nonumber\\
    &\qquad\qquad\ +\frac{\omega_k}{N^4\ln{2}}\bigg(\frac{D_{k,k}^2}{\sigma_k^2}+\frac{|g_{k,k}|^2}{\sigma_k^4}\sum_{i\neq k}D^2_{k,i}\bigg)\bigg],
\end{align}
where 
\begin{align}
    g_{k,i}&=\int_{\mathcal{L}} J_{i}(\ell)\tilde h(\bm p_k,\bm \ell)\,d\ell,\\
    D_{k,i}&= \frac{L^3}{24}\times\Big(\tilde\Psi_{i}^{(2)}\tilde\Sigma_k^{(0)}+\tilde\Psi_{i}^{(0)}\tilde\Sigma_k^{(2)}+2\tilde\Psi_{i}^{(1)}\tilde\Sigma_k^{(1)}\Big),\label{eq:Cki}
\end{align}
and
\begin{align}
    &\tilde\Psi_i^{(0)}=\alpha,\\
        &\tilde\Psi_i^{(1)}= \alpha\bigg(\frac{2\pi}{\lambda}|\sin\theta_i|+\frac{2\pi L}{\lambda f_i}+\frac{(2\pi B_i)^3L^2}{4}\bigg),\label{eq:T_1}   
\end{align}
\begin{align}
    &\tilde\Psi_i^{(2)}= \alpha
    \Bigg[
    \frac{4\pi}{\lambda f_i}
    +(2\pi B_i)^3L+
    \bigg(
    \frac{2\pi}{\lambda}|\sin\theta_i|\nonumber\\
    &\qquad\qquad\qquad\qquad\qquad
    +\frac{2\pi L}{\lambda f_i}
    +\frac{(2\pi B_i)^3L^2}{4}
    \bigg)^2
    \Bigg],\label{eq:T_2}
\end{align}
\begin{align}
    &\tilde{\Sigma}_k^{(q)}= \sqrt{\frac{{\epsilon_k}}{\int_{\mathcal{L}}\big|\hat{h}(\bm p_k,\bm \ell)\big|^2 d\ell}}\bigg(\frac{\pi}{\Delta_x}\bigg)^Mc_q c_0^{M-1},\ q=0,1,2,\\
    &c_0=\frac{1}{2\pi}+\frac{\Delta_x}{\lambda},\\
            &c_1=\frac{\Delta_x}{2\pi}\times\bigg(\frac{4\pi^2}{\lambda^2}+\frac{6\pi}{\lambda\Delta_x}+\frac{3}{\Delta_x^2}\bigg),\\
            &c_2=\frac{\Delta_x}{2\pi}\times\bigg(\frac{8\pi^3}{\lambda^3}+\frac{24\pi^2}{\lambda^2\Delta_x}+\frac{30\pi}{\lambda\Delta_x^2}+\frac{15}{\Delta_x^3}\bigg).
\end{align}
\end{theorem}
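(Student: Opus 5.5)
The plan is to treat each discrete inner product $\Delta_\ell\tilde{\bm h}_k^\mathsf{H}\bm w_i$ as a midpoint-rule approximation of the continuous integral $g_{k,i}=\int_{\mathcal L}J_i(\ell)\tilde h(\bm p_k,\bm\ell)\,d\ell$, with $J_i=\alpha e^{j\beta_i}$. Write $\tilde g_{k,i}=g_{k,i}+e_{k,i}$.

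The midpoint rule on $N$ uniform cells gives
\begin{align}
|e_{k,i}|\le \frac{L^3}{24N^2}\sup_{\ell\in\mathcal L}\Big|\frac{d^2}{d\ell^2}\big(J_i(\ell)\tilde h(\bm p_k,\bm\ell)\big)\Big|.
\end{align}
This is where twice differentiability is used. The midpoint error formula holds for the real and imaginary parts separately. I would accept the resulting constant factor, since the paper's $D_{k,i}$ simply uses $L^3/24$.

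By Leibniz's rule, $(J_i\tilde h)''=J_i''\tilde h+2J_i'\tilde h'+J_i\tilde h''$. So it suffices to bound $\sup|J_i^{(q)}|\le\tilde\Psi_i^{(q)}$ and $\sup|\tilde h^{(q)}|\le\tilde\Sigma_k^{(q)}$ for $q=0,1,2$. This yields $|e_{k,i}|\le D_{k,i}/N^2$.

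\textbf{Bounding the beamformer derivatives.} For $J_i$ this is direct. We have $J_i'=j\beta_i'J_i$ and $J_i''=(j\beta_i''-\beta_i'^2)J_i$, with $\beta_i'=c_1+2c_2\ell+3c_3\ell^2$ and $\beta_i''=2c_2+6c_3\ell$. Using $|\ell|\le L/2$ gives
\begin{align}
|\beta_i'|&\le \tfrac{2\pi}{\lambda}|\sin\theta_i|+\tfrac{2\pi L}{\lambda f_i}+\tfrac{(2\pi B_i)^3L^2}{4},\\
|\beta_i''|&\le \tfrac{2\pi}{\lambda f_i}\cdot 2+(2\pi B_i)^3L.
\end{align}
These reproduce \eqref{eq:T_1} and \eqref{eq:T_2}.

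\textbf{Bounding the channel derivatives.} For $\tilde h$, only the first hop $h(\bm r_1,\bm r_0)$ depends on $\ell$, through $\bm r_0=(0,\ell)$. So I would differentiate under the integral in \eqref{eq:eff_channel} and bound $|\Gamma|\le1$. Set $d=\|\bm r_m-\bm r_{m-1}\|\ge\Delta_x$. The free-space kernel and its $\ell$-derivatives are bounded by polynomials in $1/d$ and $2\pi/\lambda$ times $\Delta_x/(2\pi d^2)$. Integrating each hop over $y_m\in\mathbb R$ produces factors of the form $\int \Delta_x\,dy/(\Delta_x^2+y^2)=\pi$ and similar terms. Collecting powers of $1/\Delta_x$ should give $(\pi/\Delta_x)c_0$ per non-differentiated hop, and $(\pi/\Delta_x)c_q$ for the differentiated first hop. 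This recovers the product $(\pi/\Delta_x)^Mc_qc_0^{M-1}$, up to the normalization prefactor $\sqrt{\epsilon_k/\int|\hat h|^2}$.

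\textbf{Main obstacle.} I expect this channel-derivative step to be the hardest. The moment integrals $\int dy\,d^{-p}$ must be bounded in a way that yields exactly the stated polynomial coefficients (e.g.\ $3/\Delta_x^2$ and $15/\Delta_x^3$). The $M$-fold integral's absolute convergence also has to be justified, since the oscillatory $1/d^2$ kernel is only marginally integrable in some terms. My first attempt would bound each factor crudely by $d\ge\Delta_x$ and use $\int_{\mathbb R}\Delta_x^{p}/(\Delta_x^2+y^2)^{(p+1)/2}dy$-type identities.

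\textbf{Passing to the rate gap.} Write $S_k=|g_{k,k}|^2$ and $I_k=\sum_{i\ne k}|g_{k,i}|^2$. Using
\begin{align}
|\log_2(1+a)-\log_2(1+b)|\le |a-b|/\ln 2,
\end{align}
it suffices to bound $|\gamma_k-\tilde\gamma_k|$. Split this into a numerator perturbation and a denominator perturbation, bounding denominators below by $\sigma_k^2$.

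For the numerator, $\big||g+e|^2-|g|^2\big|\le 2|g||e|+|e|^2$. This gives the term $\frac{2|g_{k,k}|D_{k,k}}{\sigma_k^2N^2}+\frac{D_{k,k}^2}{\sigma_k^2N^4}$.

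For the denominator, the resolvent-type difference $S_k\big|\frac{1}{I_k+\sigma_k^2}-\frac{1}{\tilde I_k+\sigma_k^2}\big|\le S_k|I_k-\tilde I_k|/\sigma_k^4$ gives the $|g_{k,k}|^2/\sigma_k^4$ terms, with the same $2|g||e|+|e|^2$ expansion. Strictly, this step uses the perturbed numerator; replacing it with $|g_{k,k}|^2$ yields the stated form up to a higher-order cross term, which I would absorb or note. Summing with weights $\omega_k$ gives \eqref{eq:theorem_1_gap}.
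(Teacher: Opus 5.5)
Your plan follows the paper's proof almost step for step. The shared steps are: the $1/\ln 2$-Lipschitz reduction to SINR gaps; the numerator/denominator perturbation with $D_k,\tilde D_k\ge\sigma_k^2$; the midpoint-rule bound $|\tilde g_{k,i}-g_{k,i}|\le\frac{L^3}{24N^2}\sup_\ell|F_{k,i}''|$; Leibniz's rule; the phase-derivative bounds, which you reproduce validly, including the paper's slightly loose $4\pi/(\lambda f_i)$; and a channel bound in which only the first hop depends on $\ell$. The step that fails as you describe it is the channel bound. This is also where the paper's own proof is defective.

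\textbf{The channel-derivative step.} In \eqref{eq:eff_channel} there are $M$ hops but only $M-1$ integration variables $y_1,\dots,y_{M-1}$, because $y_M=y_k$ is fixed. So you cannot collect one factor $\pi/\Delta_x$ per hop. Take $|\Gamma|\le1$ and the correct pointwise bounds $|\partial_\ell^q h(\bm r_1,\bm r_0)|\le c_qK(y_1-\ell)$ and $|h(\bm r_m,\bm r_{m-1})|\le c_0K(y_m-y_{m-1})$, where $K(y)=1/(\Delta_x^2+y^2)$. The chain integral is then the $M$-fold convolution $K^{*M}(y_k-\ell)$. Since $\Delta_xK/\pi$ is a Cauchy density of scale $\Delta_x$,
\[
\big|\partial_\ell^q\hat h(\bm p_k,\bm\ell)\big|\le c_qc_0^{M-1}\Big(\frac{\pi}{\Delta_x}\Big)^M\frac{M\Delta_x}{\pi\big(M^2\Delta_x^2+(y_k-\ell)^2\big)}\le c_qc_0^{M-1}\Big(\frac{\pi}{\Delta_x}\Big)^M\frac{1}{\pi x_k}.
\]
The cruder estimate $\|K\|_\infty\|K\|_1^{M-1}$ gives the extra factor $1/(\pi\Delta_x)$ instead. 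Either way, you obtain the stated $\tilde\Sigma_k^{(q)}$ only if $\pi x_k\ge1$ (resp.\ $\pi\Delta_x\ge1$), which is not a hypothesis of the theorem. It does hold in meters for the simulated geometry.

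A units check shows this is not just a weak estimate. $\tilde h$ scales as length$^{-1/2}$, so $\Sigma_k^{(q)}$ scales as length$^{-1/2-q}$, while the stated $\tilde\Sigma_k^{(q)}$ scales as length$^{1/2-q}$. So the stated constant cannot follow without some unit-dependent condition of this kind.

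The paper does not close this gap either. Its bound on $|\partial^q\hat h|$ drops the $(M-1)$-fold integral and then simply asserts the factor $(\pi/\Delta_x)^M$. To make the argument honest, either add the hypothesis $\pi x_k\ge1$, or replace $(\pi/\Delta_x)^M$ in $\tilde\Sigma_k^{(q)}$ by $\pi^{M-1}/(M\Delta_x^{M+1})$. The $1/N^2$ and $1/N^4$ structure of \eqref{eq:theorem_1_gap} is unaffected.

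Differentiating under the integral sign is justified by dominated convergence, using $\sup_{\ell\in\mathcal L}K(y_1-\ell)=1/\big(\Delta_x^2+((|y_1|-L/2)^+)^2\big)$; this is what the paper's $t_+$ is for. Your marginal-integrability worry is unfounded, since $K\in L^1(\mathbb R)$.

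\textbf{Two smaller points.} First, the cross term you worry about in the SINR step disappears if you split as the paper does:
\[
\frac{|g_{k,k}|^2}{D_k}-\frac{|\tilde g_{k,k}|^2}{\tilde D_k}=|g_{k,k}|^2\Big(\frac{1}{D_k}-\frac{1}{\tilde D_k}\Big)+\frac{|g_{k,k}|^2-|\tilde g_{k,k}|^2}{\tilde D_k}.
\]
Here the denominator term carries the unperturbed $|g_{k,k}|^2$. With $\big||g+e|^2-|g|^2\big|\le2|g||e|+|e|^2$ and $|e_{k,i}|\le D_{k,i}/N^2$, you land exactly on the four stated terms, with nothing to absorb.

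Second, for complex $F_{k,i}$ you need not split into real and imaginary parts, which would cost a constant. The integral remainder $F(\ell)-F(\ell_n)-F'(\ell_n)(\ell-\ell_n)=\int_{\ell_n}^{\ell}F''(s)(\ell-s)\,ds$ is bounded by $\frac12\sup|F''|(\ell-\ell_n)^2$. This gives $L^3/(24N^2)$ with no loss. The paper's mean-value point $\zeta_{n,\ell}$ is not literally valid for complex-valued $F$, but this repairs it.
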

\begin{proof}
     The proof is in the Appendix.
\end{proof}

From \textbf{Theorem~\ref{theorem:gap}}, we can make several key observations, as discussed next. 


\begin{remark}[Scaling with the aperture discretization $N$]\label{remark1}
    From \eqref{eq:theorem_1_gap}, we see that the weighted sum-rate gap depends on $1/N^2$, which is consistent with classical discretization results for continuous aperture integrations in  \cite{C-MIMO}. When $N$ is large enough (e.g., $N\geq\sqrt{C/\varepsilon}$ where $C$ represents other coefficients in \eqref{eq:theorem_1_gap} and $\varepsilon$ is tolerance), the continuous aperture beamforming function $J_k(\ell),\forall \ell\in\mathcal{L},k\in\cK$ in \eqref{eq:get_J_k} can be represented by the corresponding $N$--element discrete vector $\bm w_k\in\mathbb{C}^{N\times1}$ in \eqref{eq:dis_beamforming}, without incurring a significant performance loss in weighted sum-rate.
\end{remark}

\begin{remark}[Dependence on the steering parameter $\theta_k$]\label{remark2}
    The steering angle $\theta_k,k\in\cK$ affects the weighted sum-rate gap through $|\sin\theta_k|$ and $|\sin\theta_k|^2$ in \eqref{eq:theorem_1_gap} and \eqref{eq:T_1}. This dependence shows a difference between discretization for curved beams and conventional straight beams: the weighted sum-rate gap increases as $|\sin\theta_k|$ increases. This is because when $|\sin\theta_k|$ increases (i.e., the beam becomes more parallel to the array plane), the spatial phase variation along the aperture increases, which requires denser aperture discretization to accurately capture the continuous propagation variation. Consequently, a larger $N$ is required to maintain the same weighted sum-rate gap level when $|\sin\theta_k|$ increases.
\end{remark}

\begin{remark}[Dependence on the focusing parameter $f_k$]\label{remark3}
    The focusing parameter $f_k$ also affects the gap through $1/f_k$ and $1/f_k^2$ in \eqref{eq:T_1} and \eqref{eq:T_2}. This result provides a discretization guideline that is specific to beam focusing. In particular, the weighted sum-rate gap increases as $1/f_k$ increases (i.e., $f_k$ decrease). This is because when $1/f_k$ increases, the beam energy is squeezed into a smaller region, which requires a higher spatial frequency beam across the aperture. Consequently, a larger $N$ is needed to approximate the continuous aperture with the same performance gap when the distance $f_k$ from the transmit array center to the concentrated point decreases.
\end{remark}

\begin{remark}[Dependence on the curvature parameter $B_k$]\label{remark4}
    The curvature parameters $B_k$ contributes to the performance gap via $(2\pi B_k)^3$ and $(2\pi B_k)^6$ in \eqref{eq:T_1} and \eqref{eq:T_2}. Different from straight beam, the curved beam model introduces a cubic phase term over the aperture. Hence, the weighted sum-rate gap increases with $B_k$ grows polynomially. This is because a larger $B_k$ corresponds to a sharper trajectory of the beam, which indicates that the aperture needs to encode stronger nonlinear phase modulation across $\ell$. Consequently, a larger $B_k$ typically requires a larger $N$ to keep the weighted sum-rate loss at the same level.
\end{remark}

\section{Proposed Enhanced BCA Approach}\label{sec:solution}
By using the result in \textbf{Theorem~\ref{theorem:gap}}, we can recast problem \eqref{pro:snr_max} as follows:
\begin{subequations}
\label{pro:dis_snr_max}
\begin{align}
    &\underset{\{\theta_i,f_i,B_i\}_{i=1}^K}{\max}  \ \ \ \sum_{k=1}^K \omega_k \, \tilde R_k\big(\{\theta_i,B_i,f_i\}_{i\in\cK}\big) \tag{\ref{pro:dis_snr_max}}\\
    &\quad \ \ \,    \text {s.t.}  \ \, \qquad \, \,  \sum_{k=1}^K\big\|\bm w_k(\theta_k,f_k,B_k)\big\|_2^2 \Delta_\ell \leq P_\mathrm{T}, \label{cont:dis_power}\\ &\qquad\qquad \qquad\, \eqref{cont:theta},\eqref{cont:f}\ \mathrm{and}\ \eqref{cont:B}.
\end{align}
\end{subequations} 
From \eqref{pro:dis_snr_max}, we observe that the integrations over the continuous aperture in SINR and achievable rate expressions  have been converted into finite summations and vector multiplications via discretization. However, the problem in \eqref{pro:dis_snr_max} is still difficult to solve due to the non-convex objective induced by the fractional structures of the sum-rate function and the SINR expressions. To solve \eqref{pro:dis_snr_max}, we adopt an enhanced BCA approach \cite{Pock2016iPALM,Ahookhosh2021BIBPA}, in which an extrapolation-based surrogate-construction point and proximal regularization are jointly employed to improve update stability and accelerate convergence. This differs from conventional BCA methods, which directly update each block and may therefore suffer from severe oscillations due to the coupling among the optimization variables. In particular, we first reformulate the problem in \eqref{pro:dis_snr_max} via FP, and then decompose it into a sequence of user-wise subproblems which are solved by the proposed enhanced BCA approach. The use of BCA is practical here because the optimization variables can be naturally divided into user-wise blocks \(\{\theta_k,f_k,B_k\}\), and each block update only involves a low-dimensional curved beam parameter optimization problem given the parameters of the other users.


\subsection{Problem Transformation}
To solve \eqref{pro:dis_snr_max}, we use the Lagrangian dual transform in \cite{shenb} to remove the logarithmic function in the achievable rate, i.e., $R_k\big(\{\theta_i,B_i,f_i\}_{i\in\cK}\big)$ for user $k\in\cK$ and recast it as
\begin{align}\label{eq:tilede_R}
\tilde{R}_k=\log(1+\mu_k)&-\mu_k\nonumber\\&+\frac{(1+\mu_k)|u_{k,k}(\theta_k,f_k,B_k)|^2}{\sum_{i=1}^K| u_{i,k}(\theta_i,f_i,B_i)|^2+\sigma_k^2},
\end{align}
where
\begin{align}
    u_{k,i}(\theta_i,f_i,B_i)\triangleq \Delta_\ell\tilde{\bm h}_k^\mathsf{H}\bm w_i(\theta_i,f_i,B_i), \quad k,i\in\cK,
\end{align}
and $\mu_k$ is an auxiliary variable. 

Then, we use the quadratic transform in \cite{shenb} to further transform the fractional terms, i.e., $\frac{(1+\mu_k)|u_{k,k}(\theta_k,f_k,B_k)|^2}{\sum_{i=1}^K| u_{i,k}(\theta_i,f_i,B_i)|^2+\sigma_k^2}$,  in \eqref{eq:tilede_R} into a quadratic functions as
\begin{align}\label{eq:hatR}
    \hat{R}_k=2&\sqrt{1+\mu_k}\Re\big\{\eta_k^*u_{k,k}(\theta_k,f_k,B_k)\big\}
    \nonumber\\
    & 
    -|\eta_k|^2
    \Big(
    \sum_{i=1}^K |u_{k,i}(\theta_i,f_i,B_i)|^2+\sigma_k^2
    \Big),
\end{align}
where $\eta_k$ is an auxiliary variable. 

Hence, the problem in \eqref{pro:dis_snr_max} is transformed and simplified as
\begin{subequations}
\label{pro:fp_transform_problem}
\begin{align}
    \max_{\substack{\{\theta_i,f_i,B_i\}_{i=1}^K,\\\bm\mu, \bm\eta}}
    \ \
    &\,\mathcal{F}\Big(\{\theta_i,f_i,B_i\}_{i=1}^K,\bm{\mu},\bm{\eta}\Big):=\sum_{k=1}^K \frac{\omega_k}{\ln 2}\hat{R}_k
    \tag{\ref{pro:fp_transform_problem}}\\
    \text{s.t.}\quad \ \quad \
    &\,\eqref{cont:theta},\eqref{cont:f}, \eqref{cont:B}\ \mathrm{and}\ \eqref{cont:dis_power},
\end{align}
\end{subequations}
where $\boldsymbol{\mu}\triangleq[\mu_1,\ldots,\mu_K]^\top$ and $\boldsymbol{\eta}\triangleq[\eta_1,\ldots,\eta_K]^\top$.

The objective function in \eqref{pro:fp_transform_problem} remains non-concave with respect to $\{\theta_i,B_i,f_i\}_{i\in\cK}$ due to the nonlinear dependence of the beamforming vectors ${\bm w_i(\theta_i,f_i,B_i)}$ on the physical beam parameters, as well as the coupled multi-user interference terms in the objective function. To tackle this issue, we next decompose \eqref{pro:fp_transform_problem} into a sequence of user-wise subproblems and solve them by an enhanced BCA approach.

\subsection{Proposed Enhanced BCA Approach}
From \eqref{pro:fp_transform_problem}, we see that only the desired-signal term $u_{k,k}(\theta_k,f_k,B_k)$ and the interference terms $\big|\tilde{\bm h}_i^\mathsf{H}\bm w_k(\theta_k,f_k,B_k)\big|^2, \forall i\in\cK$, depend on $\{\theta_k,f_k,B_k\}$. Therefore, we decompose the original problem in \eqref{pro:fp_transform_problem} into $K$ individual subproblem as
\begin{subequations}
\label{pro:block_k_problem}
\begin{align}
    &\underset{\substack{\theta_k,f_k,B_k,\\\mu_k, \eta_k}}{\max}  \qquad \ \ \mathcal{\tilde F}_k(\theta_k,f_k,B_k) \tag{\ref{pro:block_k_problem}}\\
    &\quad \,     \text {s.t.}  \  \qquad \ \ \ -\frac{\pi}{2}\le \theta_k\le \frac{\pi}{2},\label{cont:t_k2}\\ &\qquad\qquad \qquad 0\leq f_k\le x_k,\label{cont:f_k2}\\ &\qquad\qquad \ \ \ \ \ \ -B_k^{\max}\leq B_k\leq B_k^{\max}, \label{cont:B_k2}
\end{align}
\end{subequations}
where
\begin{align}
    &\mathcal{\tilde F}_k(\theta_k,f_k,B_k)
    \triangleq \frac{2\omega_k}{\ln 2}\sqrt{1+\mu_k}
\Re\!\left\{\eta_k^*\,u_{k,k}(\theta_k,f_k,B_k)\right\}
\nonumber\\
&\qquad\qquad\qquad\quad\
-\sum_{i=1}^K\frac{\omega_i}{\ln 2}|\eta_i|^2\Delta_\ell^2
\Big|\tilde{\bm h}_i^\mathsf{H}\bm w_k(\theta_k,f_k,B_k)\Big|^2 .
\end{align}
However, \eqref{pro:block_k_problem} is still non-concave, since the beamforming vector $\bm w_k(\theta_k,f_k,B_k)=\bm\alpha_k\odot e^{j\bm\beta_k(\theta_k,f_k,B_k)}$ for user $k$ is non-concave with respect to $\theta_k$, $f_k$, and $B_k$.

Next, to solve each above non-concave subproblem in \eqref{pro:block_k_problem} for user $k\in\cK$, we construct a tractable surrogate function for the objective function $\tilde{\mathcal F}_k(\theta_k,f_k,B_k)$. Combined with the preceding FP transformation in \eqref{eq:tilede_R}--\eqref{eq:sr_term2_new}, this allows the overall problem to be tackled within an minorization–maximization framework, where the auxiliary variables $\bm \mu, \bm \eta$ and the variables $\{\theta_k,f_k,B_k \}, k\in\cK$ are updated iteratively until convergence to a stationary solution.


To construct the surrogate function, we first need to determine the surrogate-construction point. Different from current works that directly use the solution of last iteration to construct surrogate function, we generate the surrogate-construction point based on the solutions at two previous iterations \cite{Pock2016iPALM}. Here, using the information from two previous iterations, can capture the local descent trend thus accelerating convergence, i.e., \begin{align}
    \tilde{\theta}_k
    &=
    \underline{\dot{\theta}_k}
    +
    \upsilon_k\big(\,\underline{\dot{\theta}_k}-\underline{\ddot{\theta}_k}\,\big),\label{eq:theta_bar}\\
    \tilde{f}_k
    &=
    \underline{\dot{f}_k}
    +
    \upsilon_k\big(\,\underline{\dot{f}_k}-\underline{\ddot{f}_k}\,\big),\label{eq:f_bar}\\
    \tilde{B}_k
    &=
    \underline{\dot{B}}_k
    +
    \upsilon_k\big(\,\underline{\dot{B}_k}-\underline{\ddot{B}_k}\,\big),\label{eq:B_bar}
\end{align}
where $\{\underline{\dot{\theta}_k,\dot{f}_k,\dot{B}_k}\}$ and $\{\underline{\ddot{\theta}_k,\ddot{f}_k,\ddot{B}_k}\}$ 
are the two previous points, $\upsilon_k\in[0,1)$ is a parameter which determines how weight of the two previous points contributing to the update. 


Based on the surrogate-construction point in \eqref{eq:theta_bar}--\eqref{eq:B_bar}, we construct a linear surrogate function of the objective function in problem \eqref{pro:block_k_problem}. For notational compactness, we rewrite the optimization variables and the surrogate-construction point in vector forms as
\begin{align}
    \bm{c}_k \triangleq [\theta_k,f_k,B_k]^\top,
    \qquad
    \tilde{\bm{c}}_k \triangleq \big[\tilde{\theta}_k,\tilde{f}_k,\tilde{B}_k\big]^\top.
\end{align}
Given $\bm{c}_k$ and $\tilde{\bm{c}}_k$, by employing the first-order \textit{Taylor expansion}, the linear surrogate function of the objective function in problem \eqref{pro:block_k_problem} is
\begin{align}\label{eq:surrogate_1}
     \hat{\mathcal F}_k(\bm{c}_k\mid \tilde{\bm{c}}_k)\triangleq\tilde{\mathcal F}_k(\tilde{\bm{c}}_k)
    +\nabla \tilde{\mathcal F}_k(\tilde{\bm{c}}_k)^\top
    (&\bm{c}_k-\tilde{\bm{c}}_k),
\end{align}
where 
\begin{align}
    \nabla \tilde{\mathcal F}_k(\tilde{\bm{c}}_k)^\top
    =
    \left[
    \frac{\partial \mathcal{\tilde F}_k({\bm{c}}_k)}{\partial \theta_k}\bigg|_{\tilde{\bm{c}}_k},
    \frac{\partial \mathcal{\tilde F}_k({\bm{c}}_k)}{\partial f_k}\bigg|_{\tilde{\bm{c}}_k},
    \frac{\partial \mathcal{\tilde F}_k({\bm{c}}_k)}{\partial B_k}\bigg|_{\tilde{\bm{c}}_k}
    \right]^\top,
\end{align}
denotes the gradient of $\mathcal{\tilde F}_k(\theta_k,f_k,B_k)$ at the surrogate-construction point $\tilde{\bm{c}}_k$.

However, the above linear surrogate function in \eqref{eq:surrogate_1} may lead to large variable variations between consecutive iterations and thus causing numerical oscillations, since the surrogate-construction point $\tilde{\bm c}_k$ is obtained by extrapolating the two previous points. Therefore, to restrict large update steps and improve stability, we further incorporate a quadratic proximal term into the linear surrogate function in \eqref{eq:surrogate_1} \cite{Pock2016iPALM,Ahookhosh2021BIBPA}, i.e.,
\begin{align}\label{eq:surrogate_2}
    \bar{\mathcal F}_k(\bm c_k\mid \tilde{\bm c}_k)
    \triangleq
    \hat{\mathcal F}_k(\bm{c}_k\mid \tilde{\bm{c}}_k)
    -\frac{\rho_k}{2}\|\bm c_k-\tilde{\bm c}_k\|_2^2,
\end{align}
where $\kappa_k>0$ is the proximal parameter. Next, we prove that the function $\bar{\mathcal F}_k(\bm c_k\mid \tilde{\bm c}_k)$ in \eqref{eq:surrogate_2} still remains the surrogate function of the original objective function in \eqref{pro:block_k_problem}.

\begin{proposition}
$\bar{\mathcal F}_k(\bm c_k\mid \tilde{\bm c}_k)$ remains a valid surrogate function $\tilde{\mathcal F}_k(\bm c_k)$ if the proximal parameter $\rho_k$ is chosen such that
\begin{align}\label{eq:con_rho}
    \rho_k \ge 
    \sup_{\substack{\bm x,\bm y\in\mathcal X_k\\ \bm x\neq \bm y}}
    \frac{\|\nabla \tilde{\mathcal F}_k(\bm x)-\nabla \tilde{\mathcal F}_k(\bm y)\|_2}
    {\|\bm x-\bm y\|_2},
\end{align}
where $\mathcal X_k$ represents the feasible set of $\bm c_k$.
\end{proposition}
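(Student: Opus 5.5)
This is the standard descent lemma for $L$-smooth functions, applied to $\tilde{\mathcal F}_k$, which we maximize. A valid minorizing surrogate at $\tilde{\bm c}_k$ must satisfy two conditions. First, it must touch the objective there: $\bar{\mathcal F}_k(\tilde{\bm c}_k\mid\tilde{\bm c}_k)=\tilde{\mathcal F}_k(\tilde{\bm c}_k)$. Second, it must lie below the objective on the feasible set: $\bar{\mathcal F}_k(\bm c_k\mid\tilde{\bm c}_k)\le\tilde{\mathcal F}_k(\bm c_k)$ for all $\bm c_k\in\mathcal X_k$. Gradient consistency at the touching point, $\nabla\bar{\mathcal F}_k(\tilde{\bm c}_k\mid\tilde{\bm c}_k)=\nabla\tilde{\mathcal F}_k(\tilde{\bm c}_k)$, is sometimes also required.

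The touching and gradient conditions are immediate from \eqref{eq:surrogate_1}--\eqref{eq:surrogate_2}. At $\bm c_k=\tilde{\bm c}_k$ both the linear term and the proximal term vanish. Differentiating gives $\nabla\bar{\mathcal F}_k=\nabla\tilde{\mathcal F}_k(\tilde{\bm c}_k)-\rho_k(\bm c_k-\tilde{\bm c}_k)$, which equals $\nabla\tilde{\mathcal F}_k(\tilde{\bm c}_k)$ at $\tilde{\bm c}_k$.

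For the minorization, I would first justify that $\tilde{\mathcal F}_k$ is continuously differentiable on $\mathcal X_k$. The map $\bm c_k\mapsto\bm w_k=\bm\alpha_k\odot e^{j\bm\beta_k}$ is smooth in $(\theta_k,f_k,B_k)$, because $\bm\beta_k$ is polynomial in $\sin\theta_k$, $1/f_k$ and $B_k^3$. $\tilde{\mathcal F}_k$ is a real-linear plus quadratic function of $\bm w_k$, so it is smooth as well. Let $L_k$ denote the supremum in \eqref{eq:con_rho}, i.e.\ the Lipschitz constant of $\nabla\tilde{\mathcal F}_k$ on $\mathcal X_k$. The feasible set $\mathcal X_k$ is a box, hence convex, so for any $\bm c_k\in\mathcal X_k$ the segment $\tilde{\bm c}_k+t(\bm c_k-\tilde{\bm c}_k)$, $t\in[0,1]$, stays in $\mathcal X_k$. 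This assumes the extrapolated point $\tilde{\bm c}_k$ also lies in $\mathcal X_k$, either by projecting it or by noting it is a convex-type combination. The fundamental theorem of calculus along this segment then gives
\begin{align*}
\tilde{\mathcal F}_k(\bm c_k)-\tilde{\mathcal F}_k(\tilde{\bm c}_k)-\nabla\tilde{\mathcal F}_k(\tilde{\bm c}_k)^\top(\bm c_k-\tilde{\bm c}_k)
=\int_0^1\big(\nabla\tilde{\mathcal F}_k(\tilde{\bm c}_k+t(\bm c_k-\tilde{\bm c}_k))-\nabla\tilde{\mathcal F}_k(\tilde{\bm c}_k)\big)^\top(\bm c_k-\tilde{\bm c}_k)\,dt .
\end{align*}
By Cauchy--Schwarz and the Lipschitz bound, the absolute value of the integrand is at most $L_k t\|\bm c_k-\tilde{\bm c}_k\|_2^2$. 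Integrating over $t$, the left side is therefore at least $-\frac{L_k}{2}\|\bm c_k-\tilde{\bm c}_k\|_2^2\ge-\frac{\rho_k}{2}\|\bm c_k-\tilde{\bm c}_k\|_2^2$. Rearranging yields exactly $\tilde{\mathcal F}_k(\bm c_k)\ge\bar{\mathcal F}_k(\bm c_k\mid\tilde{\bm c}_k)$.

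The main obstacle is ensuring the supremum in \eqref{eq:con_rho} is finite, so that a valid $\rho_k$ exists. The difficulty is the boundary $f_k\to0$, where $c_2(f_k)=-\pi/(\lambda f_k)$ blows up. I would handle this by restricting to $f_k\ge f_{\min}>0$, an implicit physical lower bound. The box is then compact and $\tilde{\mathcal F}_k$ is $C^2$ on it, so $L_k\le\sup_{\mathcal X_k}\|\nabla^2\tilde{\mathcal F}_k\|_2<\infty$. A second, minor point is that the boundary point $\tilde{\bm c}_k$ must be feasible, which I would enforce by projecting the extrapolated point onto $\mathcal X_k$.

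Combining the touching condition, the gradient consistency and the minorization shows that $\bar{\mathcal F}_k$ is a valid surrogate. Maximizing it therefore produces a monotone ascent on $\tilde{\mathcal F}_k$, in the usual MM manner.
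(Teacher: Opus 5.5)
Your proof is correct and takes the same route as the paper. Both check the touching and gradient-matching conditions directly from the definition of $\bar{\mathcal F}_k$, then get the minorization from the descent lemma; you derive the lemma along the segment via the fundamental theorem of calculus, whereas the paper cites it. Your caveats genuinely improve on the paper, which claims $C^2$ smoothness on the box with $f_k\in[0,x_k]$ although $c_2(f_k)=-\pi/(\lambda f_k)$ is singular at $f_k=0$, and which ignores that the extrapolated point may leave $\mathcal X_k$; for the latter, only your projection fix works, because the extrapolation $(1+\upsilon_k)\times(\text{last iterate})-\upsilon_k\times(\text{previous iterate})$ is an affine, not a convex, combination.
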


\begin{proof}
To prove that $\bar{\mathcal F}_k(\bm c_k\mid \tilde{\bm c}_k)$ is a valid surrogate function of $\tilde{\mathcal F}_k(\bm c_k)$ based on \cite{Nesterov2004,Beck2017FOM}, we must verify the following three conditions: i) $\bar{\mathcal F}_k(\tilde{\bm{c}}_k\mid \tilde{\bm{c}}_k)=\tilde{\mathcal F}_k(\tilde{\bm{c}}_k)$, ii) $\nabla_{\bm{c}_k}\bar{\mathcal F}_k(\tilde{\bm{c}}_k\mid \tilde{\bm{c}}_k)
    =
    \nabla_{\bm{c}_k}\tilde{\mathcal F}_k(\tilde{\bm{c}}_k)$, and iii) $\bar{\mathcal F}_k({\bm{c}}_k\mid \tilde{\bm{c}}_k)\leq \tilde{\mathcal F}_k({\bm{c}}_k)$.

\textbf{i) }To verify the first condition, we substitute $\bm c_k=\tilde{\bm c}_k$ into \eqref{eq:surrogate_2}. Hence, we have $\bar{\mathcal F}_k(\tilde{\bm c}_k\mid \tilde{\bm c}_k)
    =
    \tilde{\mathcal F}_k(\tilde{\bm c}_k)$ as well as condition i) is verified.

\textbf{ii) }To verify the condition ii), we differentiate \eqref{eq:surrogate_2} with respect to $\bm c_k$ yields 
\begin{align}\label{eq:grad_F}
    \nabla_{\bm c_k}\bar{\mathcal F}_k(\bm c_k\mid \tilde{\bm c}_k)
    =
    \nabla \tilde{\mathcal F}_k(\tilde{\bm c}_k)
    -\rho_k(\bm c_k-\tilde{\bm c}_k).
\end{align}
By further setting $\bm c_k=\tilde{\bm c}_k$ in \eqref{eq:grad_F}, we have $\nabla_{\bm c_k}\bar{\mathcal F}_k(\tilde{\bm c}_k\mid \tilde{\bm c}_k)
    =
    \nabla_{\bm c_k}\tilde{\mathcal F}_k(\tilde{\bm c}_k)$. Hence, condition ii) is satisfied. 

\textbf{iii) }Finally, to satisfy condition iii),  since $\tilde{\mathcal F}_k(\bm c_k)$ is twice continuously differentiable on the compact feasible set $\mathcal X_k=[-\pi/2,\pi/2]\times[0,x_k]\times[-B_k^{\max},B_k^{\max}]$, the gradient $\nabla \tilde{\mathcal F}_k(\bm c_k)$ is \textit{Lipschitz continuous} on $\mathcal X_k$ \cite{Nesterov2004,Beck2017FOM}. Therefore, there exists a constant $\rho_k>0$ such that 
\begin{align}
    \rho_k \ge 
    \sup_{\substack{\bm x,\bm y\in\mathcal X_k\\ \bm x\neq \bm y}}
    \frac{\|\nabla \tilde{\mathcal F}_k(\bm x)-\nabla \tilde{\mathcal F}_k(\bm y)\|_2}
    {\|\bm x-\bm y\|_2}.
\end{align}
Then, according to the \textit{standard descent lemma} \cite[Lemma 4.22]{Beck2017FOM}, for any $\bm c_k\in\mathcal X_k$, we have 
\begin{align}
    \tilde{\mathcal F}_k(\bm c_k)
    &\ge\,
    \tilde{\mathcal F}_k(\tilde{\bm c}_k)
    +\nabla \tilde{\mathcal F}_k(\tilde{\bm c}_k)^\top
    (\bm c_k-\tilde{\bm c}_k)
    -\frac{\rho_k}{2}\|\bm c_k-\tilde{\bm c}_k\|_2^2\nonumber\\
    &=\bar{\mathcal F}_k(\bm c_k\mid \tilde{\bm c}_k),
    \qquad \forall \bm c_k\in\mathcal X_k.
\end{align}
This verifies condition iii) and comples the proof. 
\end{proof}

Given \eqref{eq:con_rho}, the non-concave problem \eqref{pro:block_k_problem} is recast as
\begin{subequations}
\label{eq:surrogate_thetafB}
\begin{align}
    &\underset{\bm c_k}{\max}  \qquad \ \  \bar{\mathcal F}_k({\bm{c}}_k\mid \tilde{\bm{c}}_k) \tag{\ref{eq:surrogate_thetafB}}\\
    &\     \text {s.t.}  \  \quad \ \ \  \ \ \bm c_k \in \mathcal{X}_k. 
\end{align}
\end{subequations}
Since problem \eqref{eq:surrogate_thetafB} is a concave quadratic problem, the optimal update of the curving parameters $\bm c_k=[\theta_k,f_k,B_k]^\top, k\in\cK$, is obtained in closed-form from the first-order condition, i.e., $\partial\bar{\mathcal F}_k(\bm{c}_k| \tilde{\bm{c}}_k)/\partial\bm c_k=0$. Then, by projecting the unconstrained update onto the feasible set defined by \eqref{cont:t_k2}--\eqref{cont:B_k2}, we have
\begin{align}
    \theta_k^{\star}
    &=\min\bigg\{ \max\bigg\{ \frac{1}{\rho_{k}}\times
    \frac{\partial \tilde{\mathcal F}_k}{\partial\theta_k}\bigg|_{\tilde{\bm c}_k}, -\frac{\pi}{2}\bigg\},\frac{\pi}{2} \bigg\},\label{eq:theta_update}\\
    f_k^{\star}
    &=
    \min\bigg\{ \max\bigg\{ \frac{1}{\rho_{k}}\times
    \frac{\partial \tilde{\mathcal F}_k}{\partial f_k}\bigg|_{\tilde{\bm c}_k}, 0\bigg\},x_k \bigg\},\label{eq:f_update}\\
    B_k^{\star}
    &=
    \min\bigg\{ \max\bigg\{ \frac{1}{\rho_{k}}\times
    \frac{\partial \tilde{\mathcal F}_k}{\partial B_k}\bigg|_{\tilde{\bm c}_k}, -B_k^{\max}\bigg\},B_k^{\max} \bigg\}. \label{eq:B_update}
\end{align}

Next, we update the auxiliary variables $\bm\mu$ and $\bm\eta$. Given $\{\theta_k,f_k,B_k\}_{i\in\cK}$, the objective function in \eqref{eq:tilede_R} and \eqref{eq:hatR} are convex with respect to $\bm\mu$ and $\bm\eta$. Thus, the optimal values of each $\mu_k$ and $\eta_k$ can be calculated by the first-order condition, i.e., $\partial \tilde{R}/\partial \mu_k=0$ and $\partial \hat{R}/\partial \eta_k=0, \forall k\in\cK$. Hence, we have
\begin{align}
    &\mu_k^\star=\gamma_k\big(\{\theta_i,B_i,f_i\}_{i\in\cK}\big), \forall k\in\cK,\label{eq:mu}\\
    &\eta_k^\star
    =
    \frac{
    \sqrt{1+\mu_k}\,u_{k,k}(\theta_k,f_k,B_k)
    }{
    \sum_{i=1}^K |u_{k,i}(\theta_i,f_i,B_i)|^2+\sigma_k^2
    }, \forall k\in\cK.
    \label{eq:sr_term2_new}
\end{align}

The auxiliary variables $\bm \mu$ and $\bm \eta$, together with the optimization variables $\{\theta_k,f_k,B_k \}, k\in\cK$ are updated iteratively until convergence to a stationary solution. 

After obtaining the curving parameters
$\{\theta_i^\star,f_i^\star,B_i^\star\}_{i\in\cK}$, we fix the corresponding
phase profiles and design the amplitude vectors
$\{\bm\alpha_i\}_{i\in\cK}$ to meet the power constraint in \eqref{cont:dis_power} according to the effective channel gains.
In particular, we define the normalized beam for user $k\in\cK$ as
\begin{align}
    \bar{\bm w}_k
    \triangleq (1/\sqrt{L})\times e^{j\bm\beta_k(\theta_i^\star,f_i^\star,B_i^\star)},
    \qquad \forall k\in\cK,
\end{align}
which satisfies $\Delta_\ell\|\bar{\bm w}_k\|_2^2=1$.
Then, we define the effective channel gain for user $k\in\cK$ as $G_k
    \triangleq\Delta_\ell^2
    \big|\bar{\bm w}_k^\mathsf{H}\tilde{\bm h}_k\big|^2$. 
Based on these gains, the transmit power is allocated proportionally as
\begin{align}
    p_k=\frac{P_{\mathrm T}G_k}{\sum_{i=1}^K G_i},
    \qquad \forall k\in\cK,
\end{align}
such that $\sum_{k=1}^K p_k=P_{\mathrm T}$.
Accordingly, we set
\begin{align}\label{eq:update_alpha}
    \bm\alpha_k^\star
    =
    \sqrt{\frac{p_k}{L}}\bm 1_N
    =
    \sqrt{
    \frac{P_{\mathrm T}G_k}{L\sum_{i=1}^K G_i}
    }\bm 1_N,
    \qquad \forall k\in\cK,
\end{align}
where $\bm 1_N\in\mathbb{R}^{N\times1}$ is the all-one vector.
Therefore, the final beamforming vector is
\begin{align}
    \bm w_k^\star
    =
    \bm\alpha_k^\star
    \odot
    e^{j\bm\beta_k(\theta_k^\star,f_k^\star,B_k^\star)},
    \qquad \forall k\in\cK.
\end{align}

The overall procedure of the proposed enhanced BCA approach for solving problem \eqref{pro:dis_snr_max} is summarized in Algorithm~\ref{alg:PCA}.

\subsection{Implementation and Complexity Analysis}
 Finally, we analyze the implementation and complexity of the proposed method, which are specified as follows:
\subsubsection{Implementation analysis}
The implementation of the proposed method consists of four stages: i) update the auxiliary variables $\bm\mu$ and $\bm\eta$, ii) find the surrogate-construction point $\{\tilde{\theta}_k,\tilde{f}_k,\tilde{B}_k\}_{k\in\cK}$, iii)  optimize $\{{\theta}_k,{f}_k,{B}_k\}_{k\in\cK}$ with fixed $\bm\mu$, $\bm\eta$ and $\{\tilde{\theta}_k,\tilde{f}_k,\tilde{B}_k\}_{k\in\cK}$ using the enhanced BCA method, and iv) determine the amplitude $\{\bm\alpha_k\}_{k\in\cK}$.

To update the auxiliary variables, we need the locations of all users $\bm r_k, \forall k$ and the blockage region $\Omega_o$, the power budget $P_{\mathrm{T}}$, and the Gaussian noise power $\sigma_k^2,\forall k$. Meanwhile, we randomly initialize the optimization variables, i.e., $\{\tilde{\theta}_k,\tilde{f}_k,\tilde{B}_k\}_{k\in\cK}$, to the feasible set. Given this information, we use \eqref{eq:mu} and \eqref{eq:sr_term2_new} to update the auxiliary variables $\bm\mu$ and $\bm\eta$ induced by the Lagrangian dual transform and quadratic transform. Then, we use \eqref{eq:theta_bar} to \eqref{eq:B_bar} to generate a surrogate-construction point and construct a surrogate function around this point by \eqref{eq:surrogate_2}. Next, we optimize $\{{\theta}_k,{f}_k,{B}_k\}_{k\in\cK}$ using the closed-forms by \eqref{eq:theta_update}--\eqref{eq:B_update}. By repeating the above procedures, we obtain the optimal $\{\tilde{\theta}_k,\tilde{f}_k,\tilde{B}_k\}_{k\in\cK}$ until the objective converges. Finally, we use \eqref{eq:update_alpha} to determine the amplitude $\{\bm\alpha_k\}_{k\in\cK}$ as well as the power allocation.

The proposed method is summarized in Algorithm~\ref{alg:PCA}, and the per-iteration variable update is given by \eqref{eq:update_flow}.
\begin{equation}\label{eq:update_flow}
\underbrace{%
  \bm\mu \to\bm\eta
   \to
  \{\tilde{\theta}_k,\tilde{f}_k,\tilde{B}_k\}_{k\in\cK}\to\{\tilde{\theta}_k,\tilde{f}_k,\tilde{B}_k\}_{k\in\cK}
}_{\text{one round of variable update}}\to 
  \{\bm\alpha_k\}_{k\in\cK}.
\end{equation}

\begin{figure*}[t]
    \centering
    \subfloat[\scriptsize Straight beam in single-user case.]{
        \includegraphics[width=0.23\textwidth]{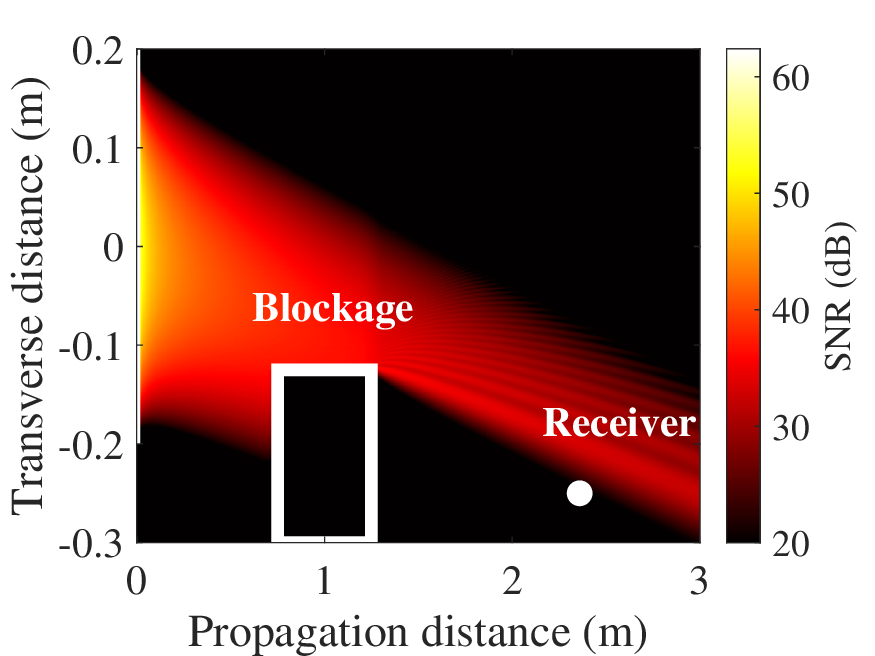}
        \label{fig:demo_1}
    }
        \hfill
    \subfloat[\scriptsize Curved beam in single-user case.]{
        \includegraphics[width=0.23\textwidth]{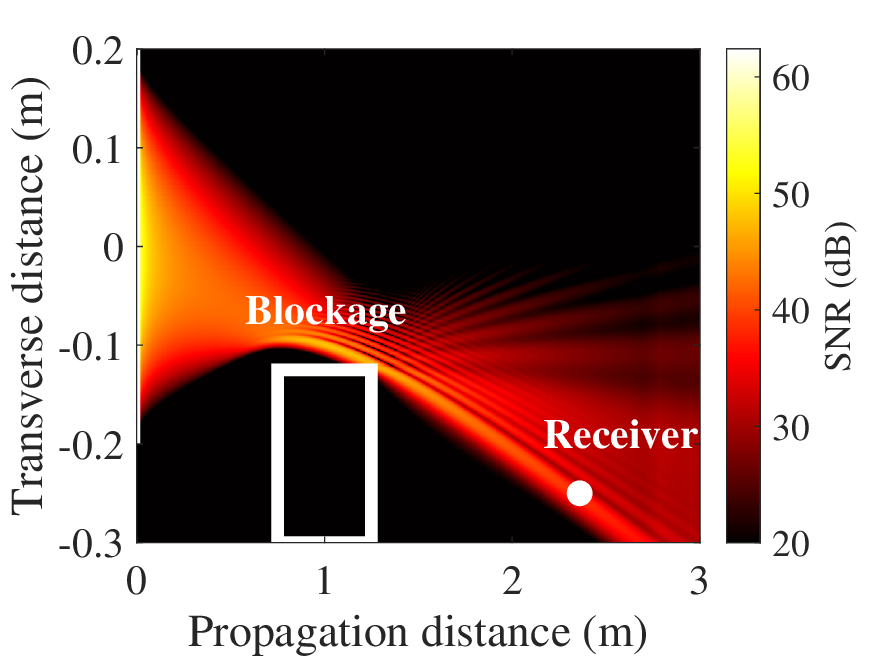}
        \label{fig:demo_2}
    }
        \hfill
    \subfloat[\scriptsize Straight beams in multi-user case.]{
        \includegraphics[width=0.23\textwidth]{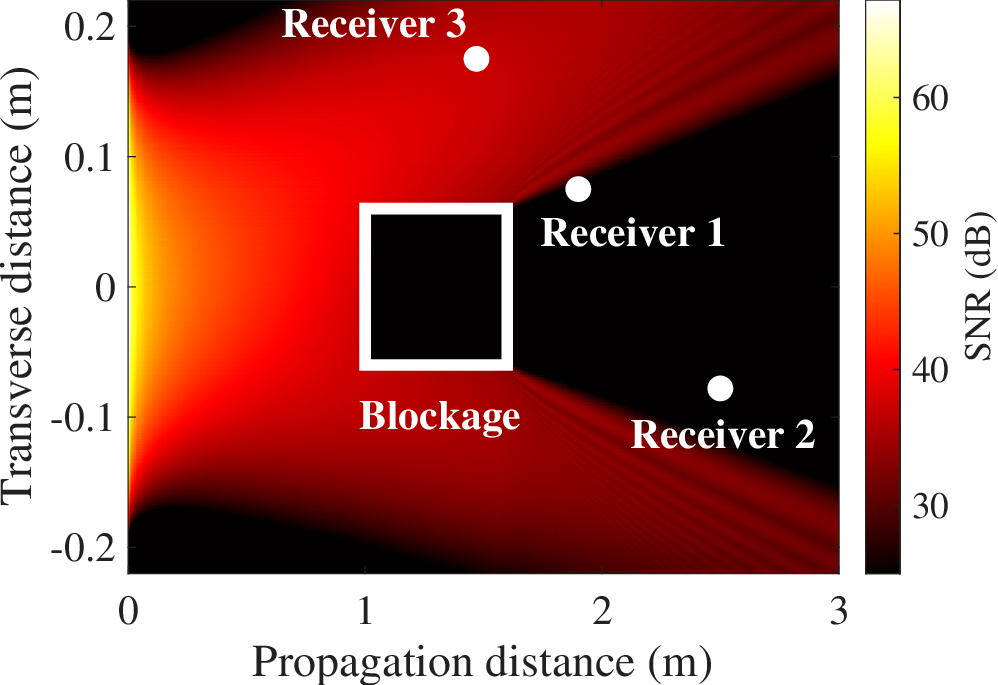}
        \label{fig:demo_3}
    }
     \hfill
    \subfloat[\scriptsize Curved beams in multi-user case.]{
        \includegraphics[width=0.23\textwidth]{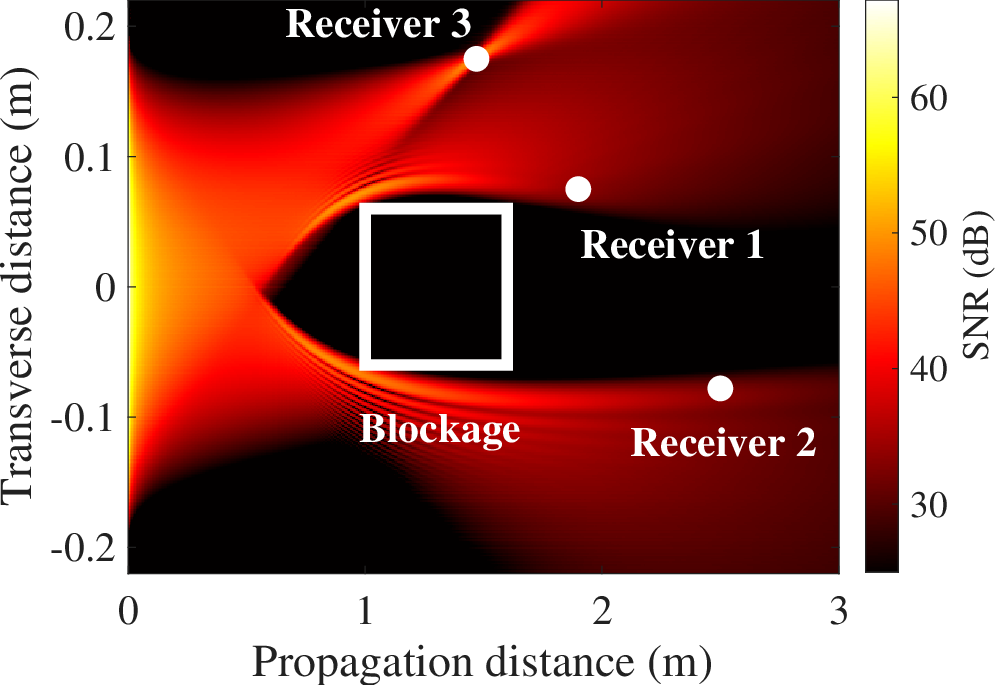}
        \label{fig:demo_4}
    }
    \caption{Visualizations of straight and curved beams in single-user and multi-user cases.}
    \label{fig:demo}\vspace{-1.4em}
\end{figure*}

\subsubsection{Complexity analysis}
The complexity analysis of the proposed approach consists of three main components:i) updating the auxiliary variables, ii) updating the optimization variables, iii) determining the amplitudes. 

\begin{algorithm}[t]
\caption{Enhanced BCA for solving problem \eqref{pro:dis_snr_max}}
\label{alg:PCA}
\DontPrintSemicolon
\SetAlgoLined
\KwIn{$\tilde{\bm h}_k,\forall k\in\cK$}
\KwOut{$\{\theta_k^\star,f_k^\star,B_k^\star\}$ and $\bm\alpha_k, k\in\cK$}

Initialize $\{\theta_k,f_k,B_k,\bm \alpha_k\}, k\in\cK$ to feasible values. \;

\Repeat{$\mathrm{the\ objective\  value\ converges}$}{

Update each $\mu_k$ by \eqref{eq:mu}.

Update each $\eta_k$ by \eqref{eq:sr_term2_new}.

Update each $\{\tilde{\theta}_k,\tilde{f}_k,\tilde{B}_k\}$ by \eqref{eq:theta_bar}--\eqref{eq:B_bar}.

Update each $\theta_k$ by \eqref{eq:theta_update}.

Update each $f_k$ by \eqref{eq:f_update}.

Update each $B_k$ by \eqref{eq:B_update}.

}

Update each $\bm \alpha_k$ by \eqref{eq:update_alpha}.

\end{algorithm}

i) \textit{\textbf{Complexity of updating the auxiliary variables:}} The overall update consists of calculating $\mu_k$ and $\eta_{k}$. The computation of $\mu_k$ and $\eta_k$ per iteration is primarily determined by the sum of the norm squares $\sum_{i=1}^K|\bu_{k,i}(\theta_i,f_i,B_i)|^2$, which has a complexity of $\mathcal{O}(NK)$. Since we have $K$ variables of $\mu_k$, the overall complexity of calculating $\mu_k$ is $\mathcal{O}(K^2N)$. Updating $\eta_k$ is similar to the update of $\mu_k$, which induces a complexity of $\mathcal{O}(K^2N)$ due to the the norm squares $\sum_{i=1}^K|\bu_{k,i}(\theta_i,f_i,B_i)|^2$. Thus, the overall complexity of updating all auxiliary variables is $\mathcal{O}(K^2N)$.

ii) \textit{\textbf{Complexity of updating the optimization variables:}} 
For each user \(k\), updating the optimization variables $\{\theta_k,f_k,B_k\}$ requires evaluating the gradient of the objective function \(\tilde{\mathcal{F}}_k(\theta_k,f_k,B_k)\) with respect to the three variables at the surrogate-construction point. The complexity is mainly determined by calculating the desired and interference terms \(\{u_{i,k}(\theta_k,f_k,B_k)\}_{i=1}^K\), where each term involves an inner product over \(N\) discrete aperture samples and thus has complexity \(\mathcal{O}(N)\). Since there are \(K\) such terms for each user \(k\), the complexity of computing the gradient for one user is \(\mathcal{O}(KN)\). After obtaining the gradient, the update of \(\theta_k\), \(f_k\), and \(B_k\) is performed by closed-form projection, which incurs \(\mathcal{O}(1)\) complexity. Therefore, the overall complexity of updating the optimization variables for all users is \(\mathcal{O}(K^2N)\).

iii) \textit{\textbf{Complexity of determining the amplitudes:}} 
After updating the optimization variables, the amplitudes are determined according to the effective channel gains \(G_k=\Delta_\ell^2\big|\bar{\mathbf{w}}_k^H\tilde{\mathbf{h}}_k\big|^2\), \(k\in\mathcal{K}\). Computing each \(G_k\) requires one inner product of two \(N\)-dimensional vectors, which has complexity \(\mathcal{O}(N)\). For a total of \(K\) users, the total complexity of calculating $G_k,\forall k\in\cK$ is \(\mathcal{O}(KN)\). Then, the power allocation \(p_k=\frac{P_TG_k}{\sum_{i=1}^K G_i}\) requires an additional \(\mathcal{O}(K)\) complexity. Finally, determining the amplitude vectors \(\boldsymbol{\alpha}_k=\sqrt{p_k/L}\times\mathbf{1}_N\), \(k\in\mathcal{K}\), requires \(\mathcal{O}(N)\) for each user, which in turn resulting \(\mathcal{O}(KN)\) complexity for all users. Hence, the overall complexity of determining the amplitudes is \(\mathcal{O}(KN)\).

Let $T$ be the maximum number of iterations, the overall complexity of the proposed approach is $\mathcal{O}\big(TK^2N\big)$, which is polynomial and scales linearly with the number \(N\) of aperture samples for a fixed number $K$ of receivers. For comparison, a commonly used zero-forcing beamforming method in practical multiuser networks has a computational complexity of \(\mathcal{O}(K^2N+K^3)\), which is also polynomial and scales linearly with \(N\). Hence, the proposed approach is computationally practical for the considered multiuser networks.

\section{Simulation Results and Analysis}\label{sec:experiments}
Next, we numerically evaluate and analyze the considered curved beam model and proposed approach. We consider a near-field scenario (e.g., indoor environment) where the blockage region (e.g., partition wall or large furniture) is located within \([0.5,1]\times[-0.4,0.4]\) m\(^2\) and receivers are randomly distributed within \([1,3]\times[-0.5,0.5]\) m\(^2\). The simulation settings are specified in Table~\ref{tab:parameters} unless otherwise stated. Moreover, the carrier frequency is $300$ GHz, i.e., a wavelength of $\lambda=10^{-3}$ m, which lies at the boundary between the mmWave and THz bands. The large-scale pathloss is $\epsilon_k=10^{-\mathrm{PL}_k/10}$, where $\mathrm{PL}_k=72+29.2\log \|\bm p_k\|, \forall k\in\cK$ \cite{pathloss}. For comparisons, we consider the following benchmarks:  \textbf{\textit{i) Straight beam}}: This benchmark only optimizes the steering angle $\{\theta_k\}_{k\in\cK}$ by greedy search. This scheme is used to show the performance gain achieved by the standard straight beams \cite{Cur-THz-conf,Cur-THz-NC,VanTrees2002}.  \textit{\textbf{ii) Greedy search}}: This benchmark updates $\{\theta_k,f_k,B_k\}_{k\in\cK}$ via greedy search. This scheme is used to provide an approximate upper bound on the performance gain achievable by curved beams. \textit{\textbf{iii) Gradient ascent}}: 
This benchmark directly updates the optimization variables $\{\theta_k,f_k,B_k \}_{k\in\cK}$ by gradient descent based on the original objective function, without employing the surrogate function. \textit{\textbf{iv) Conventional BCA}}: This benchmark optimizes the optimization variables $\{\theta_k,f_k,B_k \}_{k\in\cK}$ by directly constructing surrogate function based on the solution of last iteration without the proximal regularization term. This scheme is used to show the improvement achieved by the proposed method over conventional BCA. \textit{\textbf{v) CAPA}}: This benchmark adopts the PDM-based design in \cite{C-MIMO}, which transforms the continuous pattern design into the optimization of projection coefficients on finite orthogonal bases. This scheme is used to compare the proposed curved beams scheme with a conventional CAPA optimization method.



Fig.~\ref{fig:demo} shows four examples of straight and curved beams in the single-user and multi-user cases. In Figs.~\ref{fig:demo_1} to~\ref{fig:demo_4}, a blockage is placed in the middle of the propagation region, while the receivers are located behind the blockage. From Fig.~\ref{fig:demo_1} and Fig.~\ref{fig:demo_3}, we observe that the propagation paths are obstructed, and the receivers do not receive useful signals. In contrast, in Fig.~\ref{fig:demo_2} and Fig.~\ref{fig:demo_4}, the SNRs at the receivers are higher than that of the receivers in Fig.~\ref{fig:demo_1} and Fig.~\ref{fig:demo_3}. These results visually demonstrate the effectiveness of curved beams in bypassing blockage and restoring reliable signal reception.



\begin{table}[t]
\renewcommand{\arraystretch}{1}
    \centering
    \caption{ List of Simulation Parameters}
    \vspace{-0.6em}
    \begin{tabular}{|c||c||c||c|}
         \hline
         Parameters& Values & Parameters & Values \\
         \hline
          \hline
         $L$ & $0.2$ m & $N$& $512$ \\
         \hline
         $K$& $4$ & $M$ & $3$ \\
         \hline
         $B^{\max}_k$& $12$ &  $|\xi_m|$ & $10^{-2}$\\
         \hline
         $P_\mathrm{T}$ & $30$ dBm & $\sigma_k^2$& $-90$ dBm\\
         \hline
    \end{tabular}
       \vspace{-2em}
    \label{tab:parameters}
\end{table}

\begin{figure*}[t]
    \centering
    \subfloat[\scriptsize $N=128, L=0.4$]{
        \includegraphics[width=0.31\textwidth]{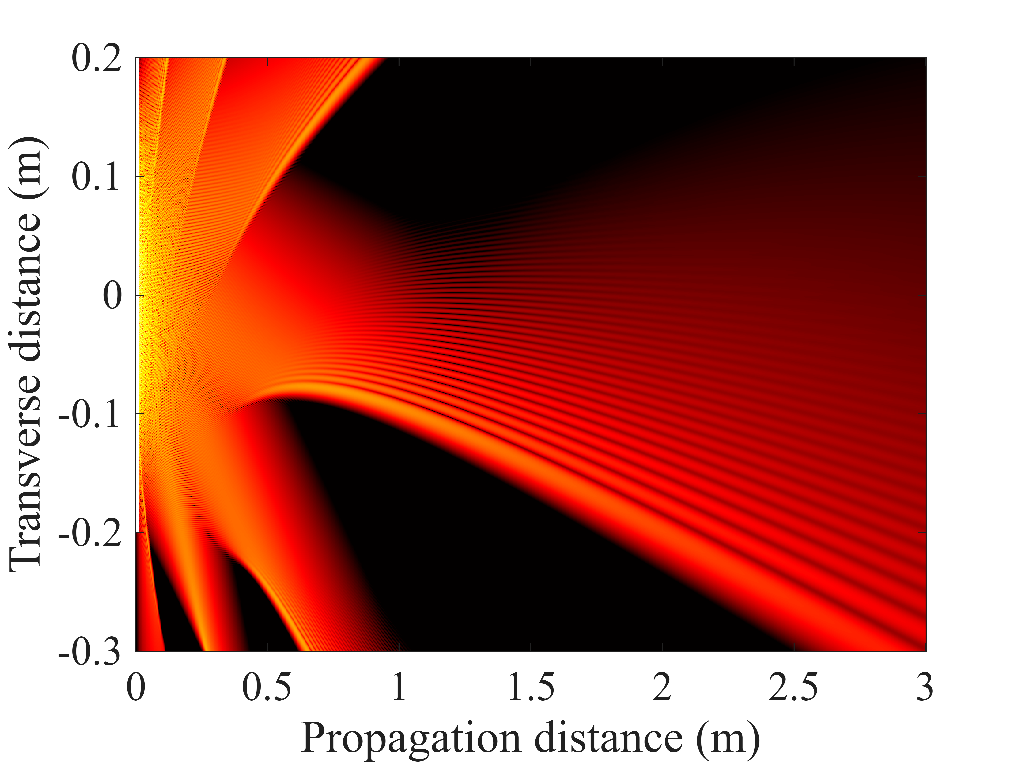}
        \label{fig:n128}
    }\hfill
    \subfloat[\scriptsize $N=512, L=0.4$]{
        \includegraphics[width=0.31\textwidth]{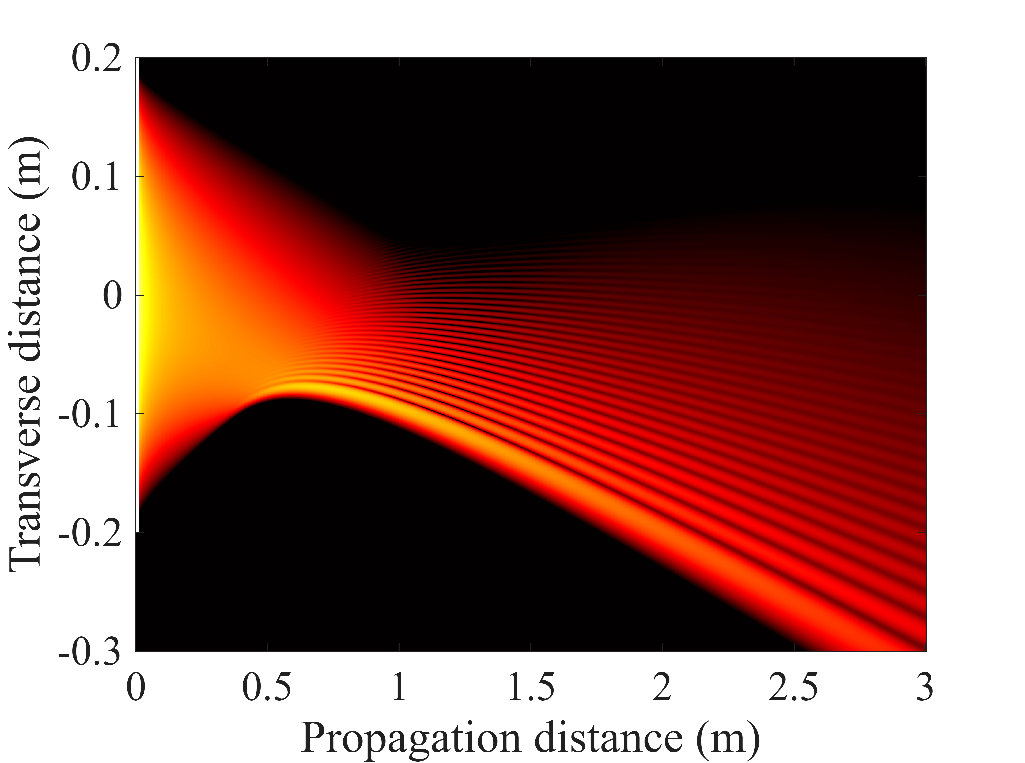}
        \label{fig:n512}
    }\hfill
    \subfloat[\scriptsize $N=512, L=0.6$]{
        \includegraphics[width=0.31\textwidth]{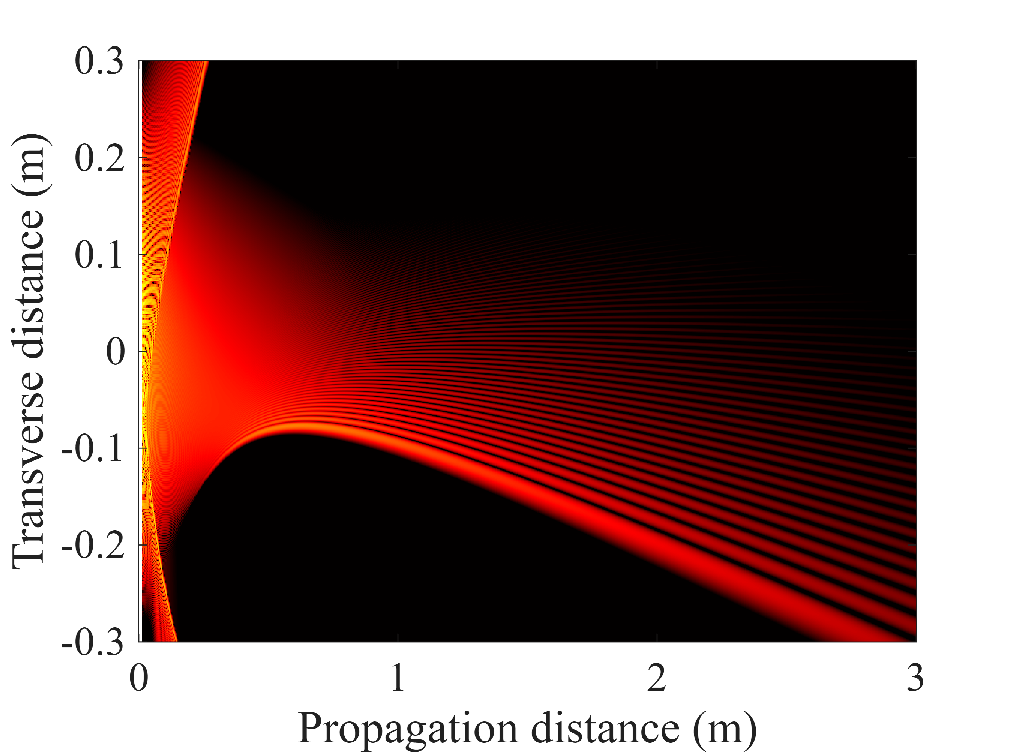}
        \label{fig:L03}
    }
    \caption{Visualizations of curved beams with different number $N$ of discrete samples and different aperture size $L$.}
    \label{fig:different_n} \vspace{-2.2em}
\end{figure*}


\begin{figure}[t]
\centering
\includegraphics[width=6.8cm]{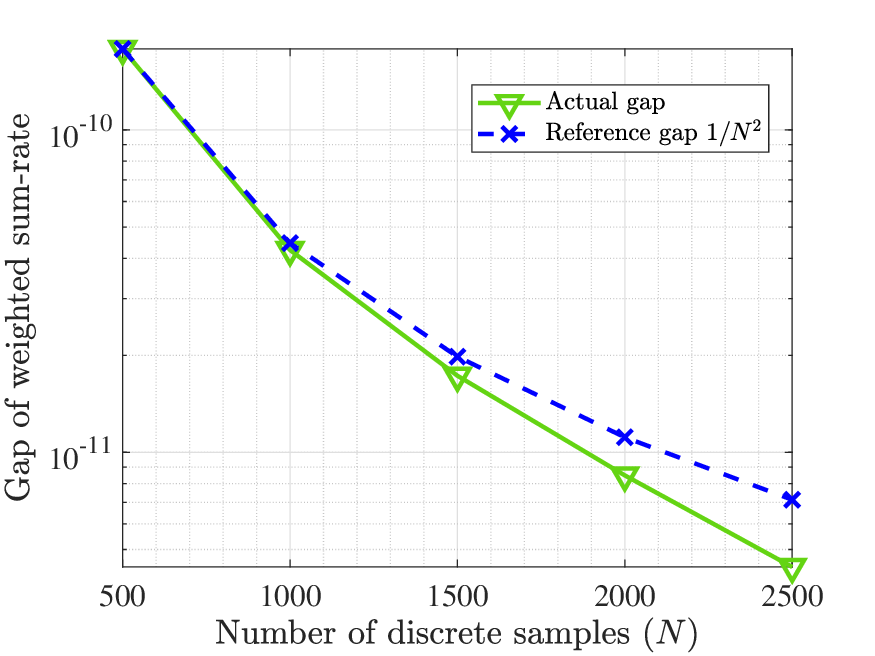}
\vspace{-1em}
\caption{Gap of weighted sum-rates vs. number $N$ of discrete samples.}
\vspace{-1em}
\label{fig:gap}
\end{figure}

Fig.~\ref{fig:different_n} shows how curved beams change as the number $N$ of discrete samples and aperture sizes vary. In Fig.~\ref{fig:n128} and Fig.~\ref{fig:n512}, the aperture size is fixed at $L=0.4$ while the number of samples increases from $N=128$ to $N=512$. It can be observed that a dense discretization can suppress the undesired sidelobe leakage and yields a smooth and accurate curved beam path. This is because a larger $N$ allows the discrete aperture to better approximate the desired continuous aperture, thereby reducing the discretization error. In Fig.~\ref{fig:L03}, $N$ is fixed at $512$ while the aperture size is increased from $L=0.4$ to $L=0.6$. In this case, the energy of the main lobe becomes more concentrated due to the larger aperture size. However, since the number of discrete samples does not increase with the enlarged aperture size, the sampling is still insufficient, which results in a sidelobe leakage.

Fig.~\ref{fig:gap} shows the weighted sum-rate gap versus the number \(N\) of discrete samples under a fixed set of beam parameters \(\{\theta_k,f_k,B_k\}_{k\in\mathcal{K}}\). The actual gap is calculated as the difference between the numerical weighted sum-rate of the continuous aperture array and that of the discrete array. In particular, the weighted sum-rate of the continuous aperture array is evaluated using the Riemann integration tool in MATLAB, while that of the discrete array is computed via a finite summation over \(N\) samples. The reference gap is generated according to the \(1/N^2\) scaling provided by \textbf{Theorem~\ref{theorem:gap}}. From this figure, we see that when \(N=500\), the actual weighted sum-rate gap is around \(10^{-10}\). Fig.~\ref{fig:gap} also shows that the actual gap further decreases as \(N\) increases. This is because a larger $N$ leads to a more accurate approximation of the continuous aperture integration, which reduces undesired sidelobe leakage and improves beamforming performance. Fig.~\ref{fig:gap} also shows that the difference between the actual gap and the reference gap increases as \(N\) increases. This is because, when the gap is sufficiently small (e.g., at the \(10^{-11}\) level), the actual gap becomes more sensitive to higher-order discretization terms beyond the dominant \(1/N^2\) scaling.

\begin{figure}[t]
\centering
\includegraphics[width=6.8cm]{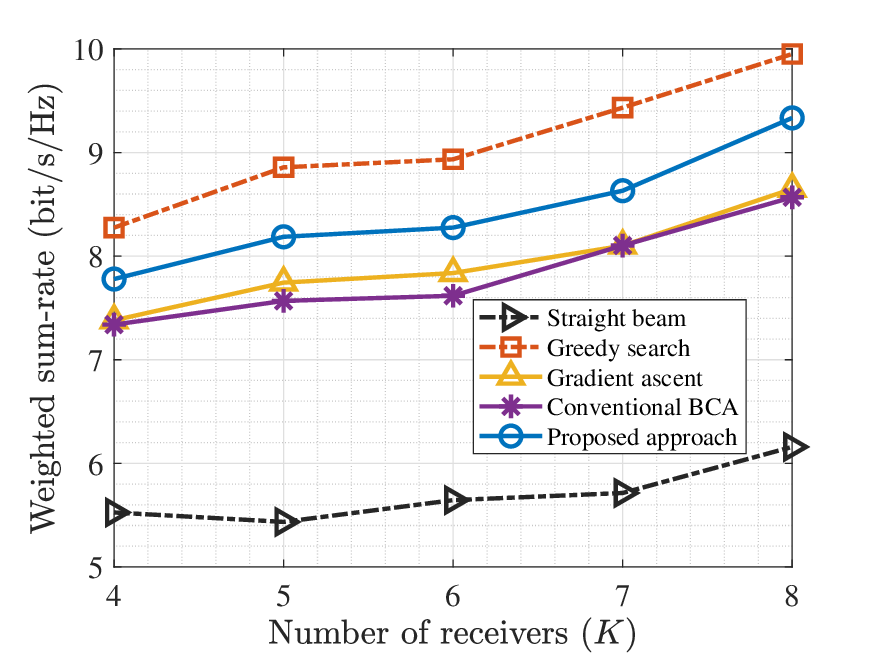}
\vspace{-1em}
\caption{ Weighted sum-rates of different methods vs. number $K$ of receivers.}
\vspace{-2em}
\label{fig:wsr_vs_user}
\end{figure}

\begin{figure}[t]
\centering
\includegraphics[width=6.8cm]{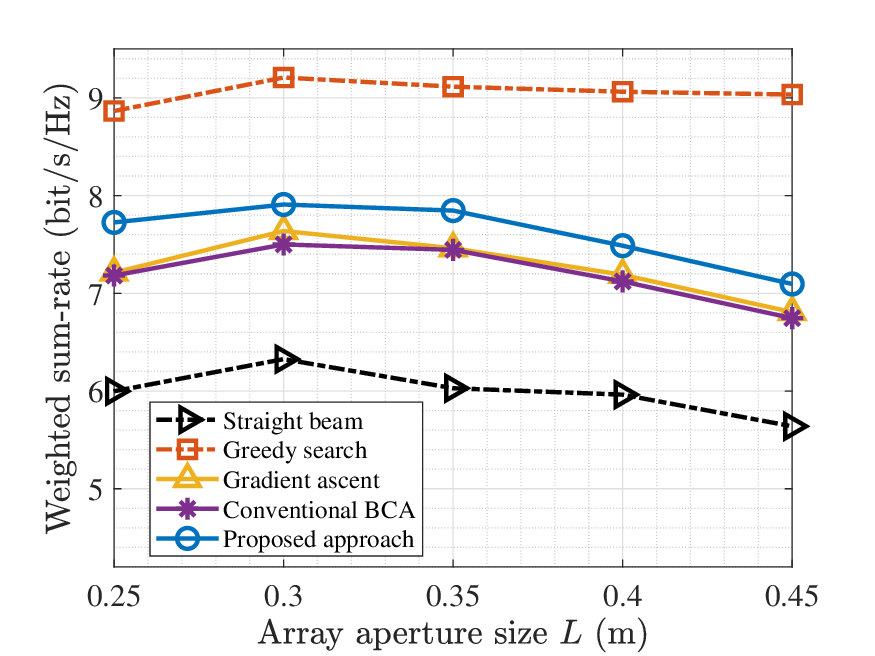}
\vspace{-1em}
\caption{ Weighted sum-rates of different methods vs. array aperture size $L$.}
\vspace{-1.14em}
\label{fig:wsr_vs_L}
\end{figure}


\begin{figure}[t]
\centering
\includegraphics[width=6.8cm]{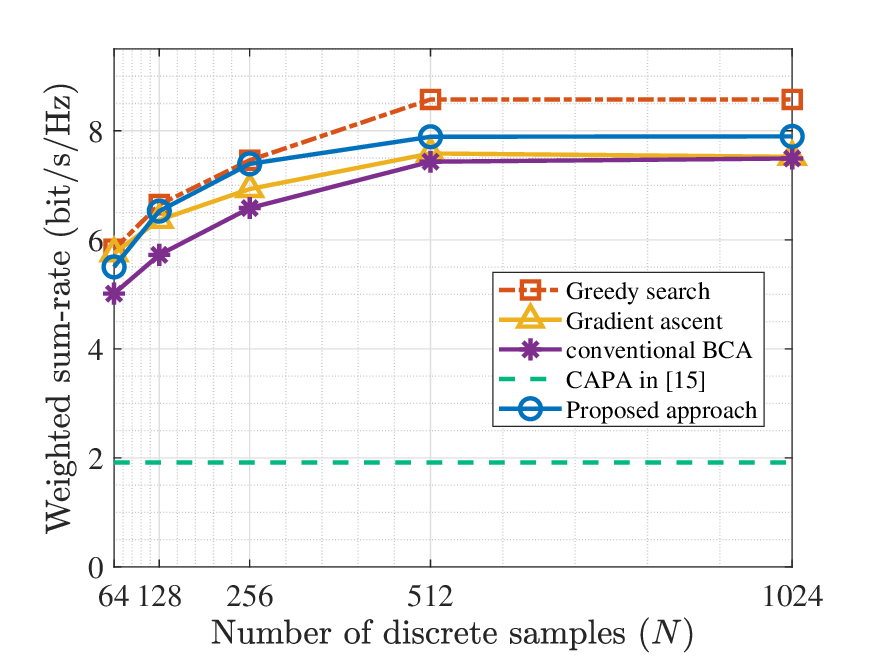}
\vspace{-1em}
\caption{ Weighted sum-rates of different methods vs. number $N$ of discrete samples}
\vspace{-2em}
\label{fig:capa}
\end{figure}

\begin{figure}[t]
\centering
\includegraphics[width=6.6cm]{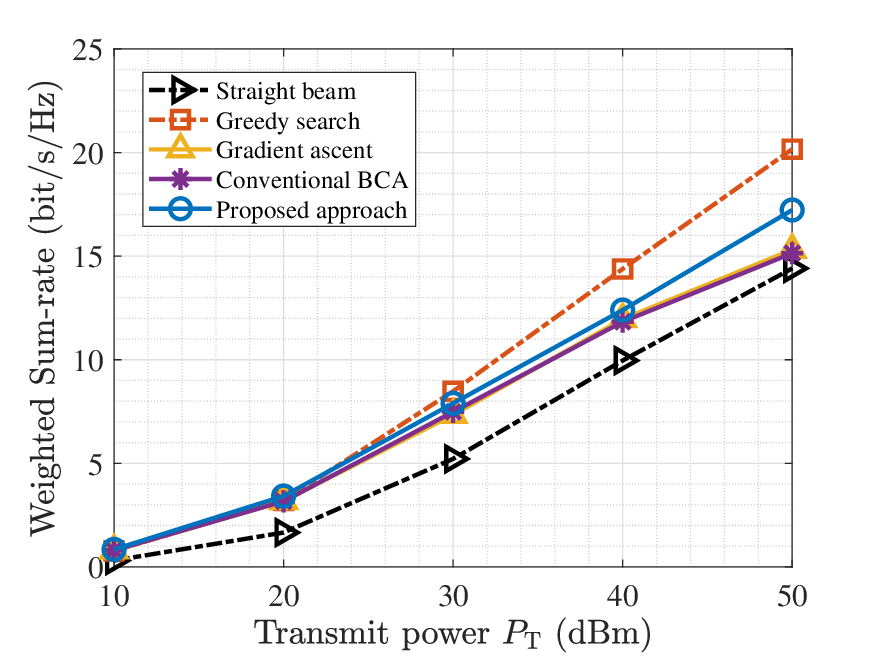}
\vspace{-1em}
\caption{ Weighted sum-rates of different methods vs. transmit power.}
\vspace{-2em}
\label{fig:wsr_vs_power}
\end{figure}


Fig.~\ref{fig:wsr_vs_user} shows the weighted sum-rate of different methods as the number \(K\) of receivers increases. In particular, when $K=8$, we observe that the proposed approach improves the weighted sum-rate by up to $58\%$ compared to the straight beam method due to the capability of curved beams to mitigate blockage-induced attenuation. From Fig.~\ref{fig:wsr_vs_user}, we also observe that the proposed approach improves the weighted sum-rate by up to \(10\%\) and \(7\%\) compared to the conventional BCA and gradient ascent methods. This is because the proposed approach leverages the local descent from previous iterations and introduces a proximal regularization term, thus enabling stable updates and reducing the possibility of being trapped in poor local optimal solutions caused by the strong coupling among optimization variables.

Fig.~\ref{fig:wsr_vs_L} shows how the weighted sum-rate changes as the array aperture size \(L\) varies. From this figure, we observe that the weighted sum-rate of all methods first increases and then decreases as \(L\) increases. This is because increasing \(L\) initially enlarges the effective aperture, which improves the spatial resolution and beamforming gain, enabling curved beams to better concentrate energy along the desired curved propagation paths. However, for a fixed number \(N\) of discrete samples, increasing \(L\) also enlarges the sampling interval \(\Delta \ell=L/N\). When \(L\ge N\times(\lambda/2)=0.256\) m, i.e., \(\Delta \ell\ge\lambda/2\), the discrete aperture no longer satisfies the half-wavelength sampling condition, which may introduce spatial aliasing and undesired sidelobe leakage. Thus, the weighted sum-rates start to decrease.

Fig.~\ref{fig:capa} compares the weighted sum-rate achieved by different curved beam optimization. Here, the CAPA method proposed in \cite{C-MIMO} uses the same equivalent channel as other curved beam-based methods, but directly designs the continuous aperture beamforming without optimizing the individual curved beam parameters. Thus, we can see that the weighted sum-rate of CAPA is independent of \(N\) and appears as a horizontal line. From this figure, we also see that the weighted sum-rates resulting from the four curved beam-based methods increase as \(N\) increases and gradually saturate. This is because a larger \(N\) provides more accurate approximation of discrete array. When \(N\) becomes sufficiently large (e.g., $N\ge 512$), the approximation error becomes negligible, so the weighted sum-rate saturates. Meanwhile, Fig.~\ref{fig:capa} also shows that the proposed approach outperforms the CAPA method even when \(N=64\) samples are used. This is because the proposed method optimizes the curved beam parameters, thus leveraging the additional degrees of freedom (DoF) introduced by curved beams. In contrast, the CAPA method optimizes the continuous aperture beamforming as a whole without exploiting the structure of curved beam parameters, which limits its performance in a blockage environment.

Fig.~\ref{fig:wsr_vs_power} shows how the weighted sum-rate changes as the transmit power $P_{\mathrm{T}}$ varies. From this figure, we observe when \(P_{\mathrm{T}}=30\) dBm, the proposed approach improves the weighted sum-rate by up to \(60\%\) and \(12\%\) compared to the straight beam method and conventional BCA methods. The \(60\%\) gain stems from the fact that the proposed approach can exploit the additional DoF offered by the curved beams. The \(12\%\) gain is because the proposed approach can utilize the calculated surrogate-construction point to escape from local optimal solutions, thus achieving a higher weighted sum-rate.

\section{Conclusion}\label{sec:conclusion}
In this paper, we have studied a curved beam
enabled wireless communication system that the transmitter can adjust the curved beam parameters to generate curved beams. To characterize the curved beam generation, we have introduced a novel curved beam model and channel model of the blockage environments. Based on the model, we have formulated a weighted sum-rate maximization problem under the transmit power budget and physical constraints of curved beam related parameters. To solve this problem, we have 
changed the continuous aperture into finite summations via discretely sampling, and analyzed the performance gap between the ideal continuous aperture design and its practical discrete aperture approximation. Based on the above discrete reformulation, we have developed an iterative algorithm to optimize the curved beam related parameters. Simulation results demonstrate that the proposed method can effectively improve the weighted sum-rate and enable reliable blockage-aware communications in wireless systems.

\section*{Appendix:\\ Proof of Theorem~\ref{theorem:gap}}
To prove \textbf{Theorem~\ref{theorem:gap}}, we first convert the  weighted sum-rate gap into an SINR gap by leveraging the triangle inequality and the \textit{Lipschitz} continuity of $\log_2(1+x)$ function on $\mathbb{R}_{++}$, which is given as follows: 
    \begin{align}
        \Big|\textstyle\sum_{k=1}^K \omega_k R_k- \sum_{k=1}^K \omega_k \tilde R_k   \Big|\stackrel{(a)}{\leq} & \textstyle\sum_{k=1}^K\omega_k| R_k-\tilde R_k|\\
        \stackrel{(b)}{\leq}&\textstyle\sum_{k=1}^K\frac{\omega_k}{\ln2}|\gamma_k-\tilde{\gamma}_k|,\label{eq:rate_gap}
    \end{align}
    where (a) follows the triangle inequality $|\sum_{k}x_k|\leq\sum_{k}|x_k|$ such that $|\sum_k \omega_k(R_k-\tilde R_k)|\leq\sum_k\omega_k|R_k-\tilde R_k|$, and (b) stems from the fact that the logarithmic function $\log_2(1+x)$ is $1/\ln2$--\textit{Lipschitz} on $\mathbb{R}_{++}$ and hence we have $\log_2(1+\gamma_k)\leq \gamma_k/\ln2$ and $\gamma_k\in \mathbb{R}_{++}$.

    Based on \eqref{eq:rate_gap}, next, we analyze the gap between the SINR $\gamma_k$ and $\tilde{\gamma}_k,k\in\cK$. To obtain compact forms of $\gamma_k$ and $\tilde{\gamma}_k$, we first define the following functions
    \begin{align}
    &F_{k,i}(\ell)\triangleq J_i(\ell)\tilde{h}(
        \bm p_k,\bm \ell).\label{eq:F} \\
        &g_{k,i}\triangleq\textstyle\int_{\mathcal{L}} F_{k,i}(\ell)\,d\ell, \quad \tilde g_{k,i}\triangleq\sum_{n=1}^N F_{k,i}(\ell_n)\Delta_\ell. \label{eq:g}\\
        &D_{k}\triangleq\textstyle\sum_{i\neq k}|g_{k,i}|^2+\sigma_k^2, \quad\tilde{D}_{k}\triangleq\textstyle\sum_{i\neq k}|\tilde{g}_{k,i}|^2+\sigma_k^2.\label{eq:D}
    \end{align}
    By substituting \eqref{eq:F}--\eqref{eq:D} into $|\gamma_k-\tilde{\gamma}_k|,k\in\cK$, we have
    \begin{align}\label{eq:gap_gamma}
    |\gamma_k-\tilde{\gamma}_k|=&\,\bigg|
\frac{|g_{k,k}|^2}{D_{k}}
-
\frac{|\tilde g_{k,k}|^2}{\tilde{D}_{k}}
\bigg| \nonumber\\
=&\, \bigg| \frac{|g_{k,k}|^2\big(\tilde{D}_{k}-D_{k}\big)+\big( |g_{k,k}|^2-|\tilde{g}_{k,k}|^2 \big)D_k}{D_{k}\tilde{D}_{k}}  \bigg|.
    \end{align}
    Then, by the triangle inequality and the definitions of $D_k,\tilde{D}_k$ in \eqref{eq:D}, we obtain the the upper bound of \eqref{eq:gap_gamma} as follows:
    \begin{align}\label{eq:sinr_gap}
        &|\gamma_k-\tilde{\gamma}_k|\nonumber\\\stackrel{(a)}{\le}&
\frac{
|g_{k,k}|^2
\left|
\sum_{i\neq k}\big(|\tilde g_{k,i}|^2-|g_{k,i}|^2\big)
\right|
}{
D_k\tilde{D}_k
}
+
\frac{
\big||g_{k,k}|^2-|\tilde g_{k,k}|^2\big|
}{D_k
} \nonumber\\
\stackrel{(b)}{\le}&
\frac{|g_{k,k}|^2}{\sigma_k^4}
\bigg|
\sum_{i\neq k}\big(|\tilde g_{k,i}|^2-|g_{k,i}|^2\big)
\bigg|
+
\frac{1}{\sigma_k^2}\bigg|\big|g_{k,k}|^2-|\tilde g_{k,k}\big|^2\bigg| \nonumber\\
\stackrel{(c)}{\le}&
\frac{2|g_{k,k}|}{\sigma_k^2}\,|\tilde g_{k,k}-g_{k,k}|
+\frac{2|g_{k,k}|^2}{\sigma_k^4}\sum_{i\neq k}|g_{k,i}|\,|\tilde g_{k,i}-g_{k,i}|\nonumber\\& +
\bigg(
\frac{1}{\sigma_k^2}|\tilde g_{k,k}-g_{k,k}|^2
+\frac{|g_{k,k}|^2}{\sigma_k^4}\sum_{i\neq k}|\tilde g_{k,i}-g_{k,i}|^2
\bigg),
    \end{align}
where (a) and (c) follow the triangle inequality, (b) follows $D_k\geq \sigma_k^2, \tilde{D}_k\geq \sigma_k^2$. From \eqref{eq:sinr_gap}, we see that the gap $|\gamma_k-\tilde{\gamma}_k|,k\in\cK,$ is upper-bounded by $|\tilde g_{k,i}-g_{k,i}|,  k,i\in\cK$. 


Next, we characterize the gap $|\tilde g_{k,i}-g_{k,i}|, \forall k,i\in\cK$. By substituting \eqref{eq:g} into $|\tilde g_{k,i}-g_{k,i}|$, we have
    \begin{align}\label{eq:11}
        \Big|\tilde g_{k,i}-g_{k,i}\Big|=&\bigg|\int_{\mathcal{L}} F_{k,i}(\ell)\,d\ell-\sum_{n=1}^N F_{k,i}(\ell_n)\Delta_\ell\bigg|.
    \end{align}
To simplify \eqref{eq:11}, we first rewrite the continuous integral term (i.e., $\int_{\mathcal{L}} F_{k,i}(\ell)\,d\ell$) into a form that can be directly compared with the discrete term (i.e., $\sum_{n} F_{k,i}(\ell_n)\Delta_\ell$). In particular, we devide the interval $\mathcal{L}=[-L/2,L/2]$ into $N$ subintervals, such as $I_n=[\ell_n-\Delta_\ell/2,\ \ell_n+\Delta_\ell/2]$.
    For any $\ell\in I_n$, by employing the \textit{second-order Taylor expansion} of $F_{k,i}$ around $\ell_n$, we have
    \begin{align}\label{eq:te_1}
       F_{k,i}(\ell)=F_{k,i}(\ell_n)&+\frac{\partial F_{k,i}(\ell_n)}{\partial \ell}(\ell-\ell_n)\nonumber\\
       &+\frac{1}{2}\times\frac{\partial^2 F_{k,i}(\zeta_{n,\ell})}{\partial \ell^2}(\ell-\ell_n)^2,
    \end{align}
    where $\zeta_{n,\ell}\in I_n$. Based on \eqref{eq:te_1}, we have
    \begin{align}\label{eq:22}
        \int_{I_n} F_{k,i}(\ell)\,d\ell= &F_{k,i}(\ell_n)\Delta_\ell\nonumber\\&+\frac{1}{2} \int_{I_n}\frac{\partial^2 F_{k,i}(\zeta_{n,\ell})}{\partial \ell^2}(\ell-\ell_n)^2d\ell,
    \end{align}
    where $\int_{I_n}(\ell-\ell_n)d\ell=0$ due to the symmetry of the interval $I_n$ with respect to its midpoint $\ell_n$. By substituting \eqref{eq:22} into \eqref{eq:11}, we have
    \begin{align}\label{eq:g_gap}
        \Big|\tilde g_{k,i}-g_{k,i}\Big|=&\frac{1}{2}\bigg|\sum_{n=1}^N\int_{I_n}\frac{\partial^2 F_{k,i}(\zeta_{n,\ell})}{\partial \ell^2}(\ell-\ell_n)^2d\ell\bigg|\nonumber\\
        \stackrel{(a)}{\leq}&\frac{1}{2}\sum_{n=1}^N\int_{I_n}\bigg|\frac{\partial^2 F_{k,i}(\zeta_{n,\ell})}{\partial \ell^2}\bigg|(\ell-\ell_n)^2d\ell\nonumber\\
        \leq& \frac{1}{2}\times\sup_{\zeta\in\mathcal{L}}\bigg|\frac{\partial^2 F_{k,i}(\zeta)}{\partial \ell^2}\bigg|\times\sum_{n=1}^N\int_{I_n}(\ell-\ell_n)^2 d\ell\nonumber\\
        =&\sup_{\zeta\in\mathcal{L}}\bigg|\frac{\partial^2 F_{k,i}(\zeta)}{\partial \ell^2}\bigg|\times\frac{L^3}{24N^2}.
    \end{align}
    where $\zeta$ is an arbitrary point in the interval $[-L/2,L/2]$, and (a) follows the triangle inequality.
   
    From \eqref{eq:g_gap}, we see that if we want to characterize the gap $|\tilde g_{k,i}-g_{k,i}|, \forall k,i\in\cK$, we must calculate the value of $\sup_{\zeta\in\mathcal{L}}\big|\partial^2 F_{k,i}(\zeta)/\partial \ell^2\big|, \forall k,i\in\cK$. Since the effective channel $\tilde{h}(\bm p_0,\bm \ell)$ in \eqref{eq:eff_channel} as well as the function $F_{k,i}$ is twice differentiable with respect to $\ell$, and that all aperture amplitudes are identical, i.e., $\alpha_k(\ell)=\alpha\geq0$, $\forall k\in\cK,\ \forall \ell\in\mathcal L$, as stated in Theorem~\ref{theorem:gap}, we have
    \begin{align}\label{eq:BB}
         \bigg|\frac{\partial^2 F_{k,i}(\ell)}{\partial \ell^2}\bigg|\stackrel{(a)}{\leq}&\Big|\frac{\partial^2 J_{i}(\ell)}{\partial \ell^2}\Big|\Big|\tilde{h}(\bm p_k, \bm\ell)\Big|+\Big|\frac{\partial^2\tilde{h}(\bm p_k, \bm\ell)}{\partial\ell^2}\Big|\Big|J_i({\ell})\Big|\nonumber\\&\qquad\quad\quad\quad+2\Big|\frac{\partial J_{i}(\ell)}{\partial \ell}\Big|\Big|\frac{\partial\tilde{h}(\bm p_k, \bm\ell)}{\partial\ell}\Big|,
    \end{align}
    where (a) follows the triangle inequality and the multiplicativity of the absolute value. Combining \eqref{eq:BB} and the definition of $\sup_{\zeta\in\mathcal{L}}\big|\partial^2 F_{k,i}(\zeta)/\partial \ell^2\big|$, we have
    \begin{align}
        \sup_{\zeta\in\mathcal{L}}\bigg|\frac{\partial^2 F_{k,i}(\zeta)}{\partial \ell^2}\bigg|=\Psi_{i}^{(2)}\Sigma_k^{(0)}+\Psi_{i}^{(0)}\Sigma_k^{(2)}+2\Psi_{i}^{(1)}\Sigma_k^{(1)},
    \end{align}
    where 
    \begin{align}
        &\Psi_i^{(q)}\triangleq\sup_{\ell\in\mathcal{L}}  \big|\partial^q J_{i}(\ell)/\partial \ell^q\big|,\quad q=0,1,2,\\
        &\Sigma_i^{(q)}\triangleq\sup_{\ell\in\mathcal{L}}  \big|\partial^q \tilde{h}(\bm p_k, \bm\ell)/\partial \ell^q\big|,\quad q=0,1,2.
    \end{align}
    To calculate the upper bound of $\sup_{\zeta\in\mathcal{L}}\big|\partial^2 F_{k,i}(\zeta)/\partial \ell^2\big|$, we derive the upper bound of $\Psi_i^{(q)}$ and $\Sigma_k^{(q)}$, respectively.
    
   \textbf{1) For $\Psi_i^{(q)},q=0,1,2$:} From \eqref{eq:beam_steering}--\eqref{eq:get_J_k}, we see that $J_i(\ell),i\in\cK$ is smooth with respect to $\ell$ over the interval $\mathcal{L}$. Meanwhile, $\ell$ is defined in the interval $\mathcal{L}=[-L/2,L/2]$ and $|e^{j\beta_k(\ell)}|=1$. Hence, we have 
    \begin{align}
        &\Psi_i^{(0)}=\alpha\triangleq \tilde{\Psi}^{(0)}_i,\label{eq:psi0}\\
        &\Psi_i^{(1)}\stackrel{(a)}\le \alpha\bigg(\frac{2\pi}{\lambda}|\sin\theta_i|+\frac{2\pi L}{\lambda f_i}+\frac{(2\pi B_i)^3L^2}{4}\bigg)\triangleq \tilde{\Psi}^{(1)}_i,\label{eq:psi1}\\
        &\Psi_i^{(2)}\stackrel{(b)}\le \alpha
    \Bigg[
    \frac{4\pi}{\lambda f_i}
    +(2\pi B_i)^3L+
    \bigg(
    \frac{2\pi}{\lambda}|\sin\theta_i|\nonumber\\
    &\qquad\qquad\qquad\qquad
    +\frac{2\pi L}{\lambda f_i}
    +\frac{(2\pi B_i)^3L^2}{4}
    \bigg)^2
    \Bigg]\triangleq \tilde{\Psi}^{(2)}_i.\label{eq:psi2}
    \end{align}
    where (a) and (b) stem from the fact that $\max_{\ell\in\mathcal{L}}|\ell|=L/2$.

    \textbf{2) For $\Sigma_k^{(q)},q=0,1,2$:} From \eqref{eq:blockage_function}, we see that the blockage indicator $\Gamma(\bm r_m)$ with $|\Gamma(\bm r_m)|<1$ is independent of $\ell$. Thus, we have
    \begin{align}\label{eq:norm_G_bound}
        &\quad\bigg|\frac{\partial^q\tilde{h}(\bm p_k, \bm\ell)}{\partial\ell^q}\bigg|\nonumber\\=&\sqrt{\frac{{\epsilon_k}}{\int_{\mathcal{L}}\big|\hat{h}(\bm p_k,\bm \ell)\big|^2 d\ell}}\times\bigg|\frac{\partial^q\hat{h}(\bm p_k, \bm\ell)}{\partial\ell^q}\bigg|\nonumber\\\leq &\sqrt{\frac{{\epsilon_k}}{\int_{\mathcal{L}}\big|\hat{h}(\bm p_k,\bm \ell)\big|^2 d\ell}}\times\bigg|\frac{\partial^qh(\bm r_1,\bm p_0)}{\partial\ell^q}\bigg|\times\prod_{m=2}^{M}\Big|h(\bm r_m,\bm r_{m-1})\Big|.
    \end{align}
    Next, we analyze $|\partial^qh(\bm r_1,\bm p_0)/\partial\ell^q|$ and $|h(\bm r_m,\bm r_{m-1})|$, respectively.

    \begin{itemize}
        \item \textit{For $|\partial^qh(\bm r_1,\bm p_0)/\partial\ell^q|$:} We first introduce the auxiliary variables $t$ and $t_{+}$ to replace the $\ell$-dependent terms, i.e.,
        \begin{align}
           t\triangleq y_1-\ell, \qquad t_+\triangleq \textstyle\min_{\ell} |t|=(|y_1|-L/2)^{+}. \label{eq:t}
        \end{align}
        By substituting \eqref{eq:t} into \eqref{eq:eff_channel}, we obtain the upper bound of $|\partial^qh(\bm r_1,\bm p_0)/\partial\ell^q|,q=0,1,2$ as specified in \eqref{eq:d0_h}--\eqref{eq:d2_h},\begin{figure*}[t]
            \begin{align}\label{eq:d0_h}
            &|h(\bm r_1,\bm p_0)|=\bigg|\frac{1}{\Delta_x^2+t^2}-\frac{j2\pi}{\lambda}\bigg|\times\bigg|\frac{\Delta_x}{2\pi (\Delta_x^2+t^2)}\bigg|\stackrel{(a)}{\leq} \frac{c_0}{\Delta_x^2+t^2}\stackrel{(b)}{\leq} \frac{c_0}{\Delta_x^2+t^2_{+}},\\
            \bigg|\frac{\partial}{\partial\ell}h(\bm r_1,\bm p_0)\bigg|&=\frac{t\Delta_x}{2\pi} \times \bigg|\frac{4\pi^2}{\lambda^2(\Delta_x^2+t^2)^{3/2}}+\frac{j6\pi}{\lambda(\Delta_x^2+t^2)^{2} }-\frac{6\pi}{(\Delta_x^2+t^2)^{5/2}}\bigg|\stackrel{(c)}{\leq}\frac{c_1}{\Delta_x^2+t^2}\stackrel{(d)}{\leq}\frac{c_1}{\Delta_x^2+t^2_{+}}, \label{eq:d1_h}\\
            \bigg|\frac{\partial^2}{\partial\ell^2}h(\bm r_1,\bm p_0)\bigg|&=\frac{\Delta_x}{2\pi}\times\bigg[t^2\bigg(\frac{j8\pi^3}{\lambda^3(\Delta_x^2+t^2)^{2}}-\frac{20\pi^2}{\lambda^2(\Delta_x^2+t^2)^{5/2}}-\frac{j24\pi}{\lambda(\Delta_x^2+t^2)^{3}}+\frac{12}{(\Delta_x^2+t^2)^{7/2}}\bigg)\nonumber\\&\qquad\qquad\qquad+\Delta_x^2\bigg(\frac{4\pi^2}{\lambda^2(\Delta_x^2+t^2)^{5/2}}+\frac{j3\pi}{\lambda(\Delta_x^2+t^2)^{3}}-\frac{3}{(\Delta_x^2+t^2)^{7/2}} \bigg)  \bigg]\stackrel{(e)}{\leq}\frac{c_2}{\Delta_x^2+t^2}\stackrel{(f)}{\leq}\frac{c_2}{\Delta_x^2+t^2_{+}}.\label{eq:d2_h}
        \end{align}
        \hrule \vspace{-1.5em}
        \end{figure*} where (a), (c), (e) follow the triangle inequality and $\Delta_x^2+t^2\geq \Delta^2_x$, (b), (d), (f) stem from the fact that $t_+^2 \leq t^2$ in \eqref{eq:t}, and
        \begin{align}
            &c_0=\frac{1}{2\pi}+\frac{\Delta_x}{\lambda},\label{eq:c_0}\\
            &c_1=\frac{\Delta_x}{2\pi}\times\bigg(\frac{4\pi^2}{\lambda^2}+\frac{6\pi}{\lambda\Delta_x}+\frac{3}{\Delta_x^2}\bigg),\\
            &c_2=\frac{\Delta_x}{2\pi}\times\bigg(\frac{8\pi^3}{\lambda^3}+\frac{24\pi^2}{\lambda^2\Delta_x}+\frac{30\pi}{\lambda\Delta_x^2}+\frac{15}{\Delta_x^3}\bigg). 
        \end{align}
       
        \item \textit{For $|h(\bm r_m,\bm r_{m-1})|,m\geq 2$:} Similar to \eqref{eq:d0_h} and \eqref{eq:c_0}, we obtain the upper bound of $|h(\bm r_m,\bm r_{m-1})|,m\geq 2$ as follows:
        \begin{align}\label{eq:h_bound}
            |h(\bm r_m,\bm r_{m-1})|\stackrel{(a)}{\leq}& \frac{c_0}{\Delta_x^2+(y_m-y_{m-1})^2}, \quad m\geq2,
        \end{align}
        where (a) follows the triangle inequality and $\Delta_x^2+(y_m-y_{m-1})^2\geq \Delta_x^2$.

    \end{itemize}
    Hence, by substituting \eqref{eq:d0_h}--\eqref{eq:d2_h} and \eqref{eq:h_bound}
    into \eqref{eq:norm_G_bound}, we have
    \begin{align}\label{eq:bound_h}
        \Sigma_k^{(q)}\leq\, \sqrt{\frac{{\epsilon_k}}{\int_{\mathcal{L}}\big|\hat{h}(\bm p_k,\bm \ell)\big|^2 d\ell}}\bigg(\frac{\pi}{\Delta_x}\bigg)^Mc_q c_0^{M-1}\triangleq \tilde{\Sigma}^{(q)}_k.
    \end{align}
    Combining the upper bounds of $\Psi_i^{(q)}, \forall i$ in \eqref{eq:psi0}--\eqref{eq:psi2} and $\Sigma_k^{(q)},\forall k$ in \eqref{eq:bound_h}, we obtain the upper bound of $\sup_{\zeta\in\mathcal{L}}|\partial^2 F_{k,i}(\zeta)/\partial \ell^2|, \forall k,i$, i.e.,
    \begin{align}\label{eq:M}
        &\textstyle\sup_{\zeta\in\mathcal{L}}\ \Big|\partial^2 F_{k,i}(\zeta)/\partial \ell^2\Big|\times L^3/24\nonumber\\\leq& \Big(\tilde\Psi_{i}^{(2)}\tilde\Sigma_k^{(0)}+\tilde\Psi_{i}^{(0)}\tilde\Sigma_k^{(2)}+2\tilde\Psi_{i}^{(1)}\tilde\Sigma_k^{(1)}\Big)\times L^3/24\triangleq  D_{k,i},
    \end{align}

Finally, by substituting \eqref{eq:sinr_gap}, \eqref{eq:g_gap}, \eqref{eq:M} into \eqref{eq:rate_gap}, we have
\begin{align}
    &\bigg|\sum_{k=1}^K \omega_k R_k- \sum_{k=1}^K \omega_k \tilde R_k   \bigg|\nonumber\\\leq\, &\sum_{k=1}^K\bigg[\frac{\omega_k}{N^2\ln{2}}\bigg(\frac{2|g_{k,k}|}{\sigma_k^2}D_{k,k}+\frac{2|g_{k,k}|^2}{\sigma_k^4}\sum_{i\neq k}|g_{k,i}|D_{k,i}\bigg)\nonumber\\
    &\qquad\qquad\ +\frac{\omega_k}{N^4\ln{2}}\bigg(\frac{D_{k,k}^2}{\sigma_k^2}+\frac{|g_{k,k}|^2}{\sigma_k^4}\sum_{i\neq k}D^2_{k,i}\bigg)\bigg].
\end{align}
This completes the proof.

\vspace{-0.5em}

\bibliographystyle{IEEEtran}     
\bibliography{IEEEabrv,Strings}

\end{document}